\theoremstyle{plain}
\newtheorem{theorem}{Theorem}[section]
\newcommand{\footcomma}{$^{,}$}
\newcommand{\Alex}[1]{{\color{black}#1}}
\newcommand{\Algo}[1]{\table}
\let\Item\item
\begin{document}

\centerline{{\huge Recurrent neural chemical reaction networks that}}

\medskip

\centerline{{\huge approximate arbitrary dynamics}}
 
\medskip
\bigskip

\centerline{
{\renewcommand{\thefootnote}{*}
\large Alexander Dack\footnote[1]{\label{authours}
Email of co-corresponding authours: alex.dack14@imperial.ac.uk or  tp525@cam.ac.uk. 
}\footcomma
\renewcommand{\thefootnote}{\arabic{footnote}}\footnote[1]{
  Department of Bioengineering and Imperial College Centre for Synthetic Biology, Imperial College London, Exhibition Road, London, SW7 2AZ, UK.}
\qquad
Benjamin Qureshi\footnotemark[1]{}
\qquad
Thomas E. Ouldridge\footnotemark[1]{}
\qquad 
Tomislav Plesa\renewcommand{\thefootnote}{*}\footnotemark[1]\footcomma\renewcommand{\thefootnote}{\arabic{footnote}}\footnote[2]{
Department of Applied Mathematics and Theoretical Physics, University of Cambridge, Centre for Mathematical Sciences, Wilberforce Road, Cambridge, CB3 0WA, UK.
}
}}
\medskip
\bigskip

\noindent
{\bf Keywords}: Chemical Reaction Networks; 
\; Artificial Neural Networks;
\; Dynamical Systems
\medskip
\bigskip

\noindent
{\bf Abstract}:
Many important phenomena 
in biochemistry and biology exploit dynamical features
such as multi-stability, oscillations, and chaos.
Construction of novel chemical systems 
with such rich dynamics is a challenging problem
central to the fields of synthetic biology and molecular nanotechnology.
In this paper, we address this problem by 
putting forward a molecular version of 
a recurrent artificial neural network, which we call 
\emph{recurrent neural chemical reaction network} (RNCRN). 
The RNCRN uses a modular architecture -- 
a network of chemical neurons -- to approximate arbitrary dynamics.
We first prove that with sufficiently many
 chemical neurons and suitably fast reactions, the RNCRN
can be systematically trained to achieve any dynamics. 
RNCRNs with relatively small number of 
chemical neurons and a moderate range of reaction rates
are then trained to display a variety of 
biologically-important dynamical features.
We also demonstrate that such RNCRNs are
experimentally implementable with 
DNA-strand-displacement technologies.
\noindent

\section{Introduction}
\emph{Artificial neural networks} (ANNs) are a set of algorithms, 
inspired by the structure and function of the brain, 
that are commonly implemented on electronic machines~\cite{mcculloch43a}. 
Owing to their powerful function approximating abilities~\cite{cybenko_approximation_1989, hornik_multilayer_1989,pinkus_approximation_1999},
ANNs have been used to solve a range of 
data-driven problems involving 
identification, prediction and classification
of patterns, e.g. in text, speech and image analysis~\cite{lecun_deep_2015}.
These networks, and particularly their recurrent counterparts~\cite{chen_neural_2018, hopfield_neurons_1984,krotov_large_2021},
have also been embedded into dynamical systems~\cite{perko_ODEs}
to learn arbitrary dynamical behaviors~\cite{ funahashi_approximation_1993,pathak_model-free_2018},
including oscillations~\cite{kimura_learning_1998} 
and chaos~\cite{sato1991learning,bucci_control_2019}.

ANNs implemented on electronic machines
can be challenging to interface with chemical 
and biological systems. To address this issue, 
as done with some other chemically-realized systems~\cite{cardelli_electric_2020, arkin_computational_1994, seelig_enzyme-free_2006, qian_scaling_2011, del_vecchio_control_2016,briat_antithetic_2016, plesa_quasi-robust_2021},
there has been substantial 
interest in directly embedding ANNs into 
\emph{chemical reaction networks} (CRNs)
- a mathematical framework used for modelling (bio)chemical processes~\cite{feinberg_CRNs_1979}.
In this paper, such CRNs, that execute ANN algorithms, 
are said to be \emph{neural}.
Current literature has focused on designing neural CRNs to solve various 
pattern-recognition problems~\cite{hjelmfelt_chemical_1991, qian_neural_2011,kim_neural_2004, chiang_reconfigurable_2015, vasic_programming_2022, linder_et_al, anderson_reaction_2021, nagipogu_neuralcrns:_2024,kieffer_molecular_2023, Samaniego2024, van_der_linden_dna_2022, brijder_chemical_2017, cherry_scaling_2018, xiong_molecular_2022,
moorman_dynamical_2019,lopez_molecular_2018},
with applications to e.g. virus detection~\cite{vasic_programming_2022}
and disease diagnostics~\cite{lopez_molecular_2018},
with successful experimental implementations using 
substrates such as nucleic acids and microorganisms~\cite{genot_scaling_2013, lopez_molecular_2018, pandi_metabolic_2019, li_synthetic_2021, okumura_nonlinear_2022}.
Particular focus has been placed on both 
theoretical investigations~\cite{moorman_dynamical_2019, anderson_reaction_2021, okumura_nonlinear_2022,vasic_programming_2022} and 
experimental implementations~\cite{okumura_nonlinear_2022,xiong_molecular_2022} 
of neural CRNs based on the classical architecture consisting of 
layers of perceptrons
- processing units whose outputs are non-linear functions of weighted sum of their inputs~\cite{rosenblatt1957perceptron}.

The neural CRNs put forward in the literature
for molecular pattern-recognition problems
take an input set of chemical concentrations 
and produce in the long-run 
\emph{static} (equilibrium) concentrations as the output.
As such, these CRNs cannot in general
be trained to produce a predefined \emph{dynamical} 
(non-equilibrium) output.
However, it is precisely the dynamical signals
that can communicate important messages in biology.
In particular, the time-evolution of 
(bio)chemical species concentrations can be modelled as a 
dynamical system~\cite{feinberg_CRNs_1979,perko_ODEs},
and the detailed information 
enconded in this time-evolution can play important roles  in biology~\cite{tkacik_information_2025}.
For example, fundamental processes 
such as cellular differentiation,
circadian clocks and DNA repair have been modelled 
as dynamical systems that exhibit features such as coexistence of multiple 
favorable states (multi-stability) and oscillations~\cite{xiong_positive-feedback-based_2003, hardin_feedback_1990, lev_bar-or_generation_2000}.
Furthermore, chaotic dynamical systems
have been used to explain survival advantage of cellular populations~\cite{heltberg_chaotic_2019}, 
and heterogeneity in bacterial responses to stress~\cite{choudhary_chaos_2023}.
Aside from occurring in native (bio)chemical systems, 
such dynamical features have also been encoded
in a variety of experimentally-built synthetic ones. 
Examples include DNA-strand-displacement systems 
that display oscillations~\cite{srinivas_enzyme-free_2017} 
and process temporal information~\cite{lapteva_dna_2022},
enzyme-aided DNA systems displaying oscillations, chaos, spiking, 
and molecular event recording~\cite{fujii_predatorprey_2013, kishi_programmable_2018, lobato-dauzier_neural_2024}, 
synthetic gene-regulatory networks displaying oscillations~\cite{stricker_fast_2008},
all the way to multi-stable synthetic bacteria~\cite{zhu_synthetic_2022}.

One class of neural systems, which can be suitable for dynamical 
problems, are \emph{chemical reservoir computers}~\cite{soloveichik_dna_2013, baltussen_chemical_2024}. These chemical-electronic hybrid systems 
consist of a suitable CRN (called the chemical reservoir)
which is coupled to a single layer implemented on an electronic computer. The authors from~\cite{baltussen_chemical_2024}
choose a particular CRN with rich dynamics, 
and demonstrate that the chemical reservoir computer
can then solve a number of computational problems, 
including emulating and predicting dynamical systems.
To achieve these tasks, only a single layer
is trained, which is less computationally intensive 
than the classical training of ANNs.
However, these chemical reservoir computers have not been shown
to possess the ability to replicate arbitrary dynamical behaviors
and, furthermore, they are not purely chemical, relying on an 
electronic computer.

To bridge the gap,  in this paper we introduce 
a family of novel (purely chemical) neural CRNs 
that can replicate arbitrary dynamical behaviors.
These CRNs embed suitable perceptron-based recurrent ANNs, 
and are called \emph{recurrent neural chemical reaction networks} (RNCRNs).
The RNCRNs contain two types of chemical species:
the \emph{executive} species, that execute the desired dynamics, 
and chemical analogues of perceptrons~\cite{anderson_reaction_2021},
called the \emph{chemical perceptrons}, 
that fine-tune the dynamics of the executive species in 
a recurrent-neural-network fashion;
see Figure~\ref{fig:RNCRN} for a schematic representation.
By exploiting the universal approximation abilities of ANNs, 
we prove that RNCRNs can be systematically trained to 
achieve dynamics of any well-behaved target dynamical system,
provided that there are sufficiently many chemical perceptrons, 
and provided that these species are governed by sufficiently 
fast reactions. We show that non-trivial dynamical behaviors can be replicated with relatively few and only moderately fast chemical perceptrons. Furthermore, we demonstrate that RNCRN-based
systems can be implemented under experimentally-feasible conditions 
using DNA-strand-displacement technologies.

 \begin{figure}[ht]
\centering
\includegraphics[width=\columnwidth, trim={4.6cm 2cm 4.6cm 1.8cm},clip]{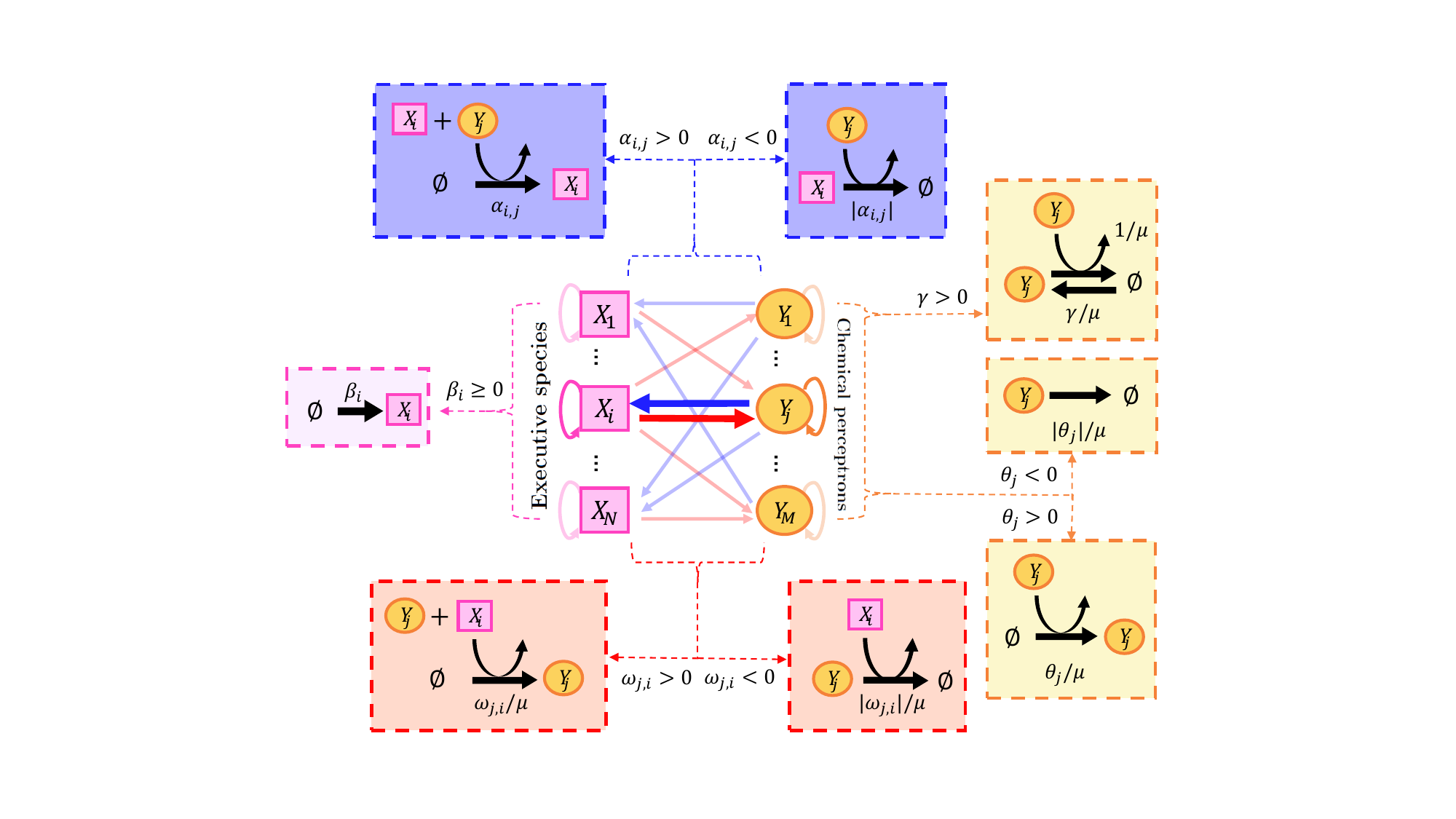}
\caption{A visualization of an RNCRN with executive species $X_1, X_2, \dots, X_N$, and a single layer of chemical perceptrons $Y_1, Y_2, \dots, Y_M$. The interaction between the executive species and chemical perceptrons is outlined in the center via arrows, while the surrounding 
boxes display more details. In particular, the purple box
on the left displays chemical reactions that involve only the executive species; similarly, the yellow boxes on the right show reactions only involving the chemical perceptrons; here, black curved arrows
indicate catalysis. On the other hand, 
the boxes on the top and bottom display reactions that involve
both types of species; these interactions are catalytic in nature. 
In particular, shown in blue on the top are the reactions affecting the executive species, where the chemical perceptrons are catalysts; on the other hand, the orange boxes on the bottom display reactions which affect the chemical perceptrons, with executive species being catalysts. The corresponding chemical reaction network is given by~(\ref{eq:CRN_single_layer}) in Appendix~\ref{app:single_layer_RNCRN};
see also Table~\ref{table:example_CRN}.}
\label{fig:RNCRN}
\end{figure}

The paper is organized as follows. 
In Section~\ref{sec:models_and_methods}, 
we present background theory on CRNs and ANNs. 
In Section~\ref{sec:results}, we introduce RNCRNs, 
and outline their universal approximation abilities,
which are rigorously proved in Appendix~\ref{app:single_layer_RNCRN}.
In Section~\ref{sec:examples}, we present an algorithm 
for training RNCRNs as Algorithm~\ref{algh:RNCRN},
which we apply to obtain relatively simple RNCRNs
displaying multi-stability, oscillations, and chaos;
more details about the examples can be found in Appendix~\ref{app:code_and_data},
while the pseudocode for Algorithm~\ref{algh:RNCRN} in Appendix~\ref{sec:pseudocode}.
In addition, in Appendices~\ref{sec:implementation}
and~\ref{sec:robustness}, we study some 
experimentally-important properties 
of some of the example systems. 
Finally, we provide a summary and discussion in Section~\ref{sec:con_disc}; 
some further details can be found 
in Appendix~\ref{sec:compare}.

\section{Background theory} 
\label{sec:models_and_methods}
In this section, we present some theory 
of chemical reaction networks and artificial neural networks.
Throughout the paper, we assume that all
the variables are dimensionless, unless stated otherwise.

\subsection{Chemical reaction networks (CRNs)}
\label{sec:crns_background} 
Consider a system of two ordinary-differential equations (ODEs) given by
\begin{align}
\frac{\mathrm{d} x}{\mathrm{d} t} 
& = \beta + \alpha x y,
\; \; \; \; \; \; 
\frac{\mathrm{d} y}{\mathrm{d} t} 
= \gamma + \theta y + \omega x y  - y^2. 
\label{eq:simple_set_ode_1}
\end{align}
Let us assume that the real (dimensionless) parameters 
$\beta, \gamma > 0$ are positive, 
while $\alpha, \theta, \omega \in \mathbb{R}$
can be either positive or negative.
Then, dynamical system~(\ref{eq:simple_set_ode_1})
is an example of \emph{reaction-rate equations} (RREs)
- ODEs that model time-evolution of concentrations of chemical species
under suitable conditions~\cite{feinberg_CRNs_1979}.
In particular, $x = x(t) \ge 0$ and $y = y(t) \ge 0$
satisfying RREs~(\ref{eq:simple_set_ode_1})
can be interpreted as (dimensionless) concentrations
of some chemical species $X$ 
and $Y$ at time $t \ge 0$, respectively.

\textbf{Chemical reactions}. Each term on the right-hand 
side of the RREs~(\ref{eq:simple_set_ode_1}) 
can be interpreted as the rate of a chemical reaction, 
which we summarize in Table~\ref{table:example_CRN}. 
For example, the term $\beta > 0$ from~(\ref{eq:simple_set_ode_1})
can be interpreted as the rate of the reaction 
$\varnothing \xrightarrow[]{\beta} X$, 
which describes a production of species $X$ 
from some species which we do not explicitly model, 
and denote by $\varnothing$. 
On the other hand, no chemical reaction can be
associated to the term $\beta < 0$~\cite{plesa_chemical_2016}.
This information is summarized in the first row of 
Table~\ref{table:example_CRN}. Similarly, in the second
row and second column, it is shown that the term $\alpha x y$ from~(\ref{eq:simple_set_ode_1}) 
can be interpreted as the rate of 
$X + Y \xrightarrow[]{\alpha} 2 X + Y$ 
if $\alpha > 0$. According to this reaction, when species 
$X$ and $Y$ react, one molecule of $X$ is produced.
Since then the molecular count of 
$Y$ remains unchanged, we say that $Y$ is 
a \emph{catalyst} in the reaction.
On the other hand, as presented in the second row and third column
of Table~\ref{table:example_CRN},
when $\alpha < 0$, the term $\alpha x y$ 
can be interpreted as the rate of  $X + Y \xrightarrow[]{|\alpha|} Y$, 
where $|\alpha|$ denotes the absolute value of $\alpha$. 
According to this reaction, when $X$ and $Y$ react, 
one molecule of $X$ is degraded with $Y$ being a catalyst. 

\begin{table}[ht] 
\caption{Chemical reactions induced by RREs~(\ref{eq:simple_set_ode_1}).}
\centering 
\begin{tabular}{c|c|c} 
\hline\hline
\textbf{Term in RREs}~(\ref{eq:simple_set_ode_1}) 
& \textbf{Reaction for positive term}
& \textbf{Reaction for negative term} \\ [0.5ex]
\hline & \\ [-2.0ex] 
$\beta$ & $\varnothing \xrightarrow[]{\beta} X$ 
& Never negative \\ 
\hline
$\alpha x y$ & $X + Y \xrightarrow[]{\alpha} 2 X + Y$ 
& $X + Y \xrightarrow[]{|\alpha|} Y$\\ 
\hline
$\gamma$ & $\varnothing \xrightarrow[]{\gamma} Y$ 
& Never negative \\ 
\hline
$\theta y$ & $Y \xrightarrow[]{\theta} 2 Y$
&  $Y \xrightarrow[]{|\theta|} \varnothing$  \\ 
\hline
$\omega x y$ & $X + Y \xrightarrow[]{\omega} X + 2 Y$ 
& $X + Y \xrightarrow[]{|\omega|} X$   \\ 
\hline
$-y^2$ & Never positive & 
$2 Y \xrightarrow[]{1} Y$ \\ [0.2ex]
\hline\hline
\end{tabular}
\label{table:example_CRN}
\end{table}

The correspondence between RREs~(\ref{eq:simple_set_ode_1}) 
and the chemical reactions, shown in Table~\ref{table:example_CRN},
is called \emph{mass-action kinetics}~\cite{feinberg_CRNs_1979}:
the rate of a reaction is given by the product of the concentrations
of the reactants (species on the left-hand side of the reaction)
multiplied by the \emph{rate coefficient} - 
the positive number displayed above the reaction arrow.
Let us note that the rate coefficient in
$- y^2$ from~(\ref{eq:simple_set_ode_1}) is fixed to $1$,
and this term is never positive, 
as summarized in the final row of Table~\ref{table:example_CRN}.
Systems of chemical reactions, such as
those presented in Table~\ref{table:example_CRN}, 
are called \emph{chemical reaction networks} (CRNs)
under mass-action kinetics~\cite{feinberg_CRNs_1979}.
In this paper, we denote the chemical species by 
$X_1,X_2,\ldots, Y_1, Y_2,\ldots$, 
their concentrations by respectively 
$x_1,x_2,\ldots, y_1, y_2,\ldots$,
and the rate coefficients of the underlying reactions
by $\alpha,\beta,\gamma, \theta, \omega$
with appropriate subscripts;
additional symbols are introduced when we consider
experimental implementations of CRNs.

\textbf{Experimental implementation}. 
In this paper, we focus on the mass-action chemical reactions 
of the form given in Table~\ref{table:example_CRN}.
All of these reactions have at most two reactants;
consequently, the induced RREs have quadratic polynomials 
on the right-hand side, as in~(\ref{eq:simple_set_ode_1}).
However, let us stress that the abstract reactions 
from Table~\ref{table:example_CRN} 
cannot be experimentally implemented in their original form;
for example, $\varnothing \xrightarrow[]{} X_1$ 
cannot be implemented as it contains a non-specific species $\varnothing$,
while $X + Y \xrightarrow[]{} 2 X + Y$
cannot be implemented as it describes a catalytic production 
as a single chemical step.
Nevertheless, CRNs consisting of any number of 
the abstract reactions from Table~\ref{table:example_CRN}, 
including reactions that have the same reactants, but different products,
such as $X + Y \xrightarrow[]{} 2 X + Y$ 
and $X + Y \xrightarrow[]{} Y$,
can be approximated by experimentally implementable CRNs~\cite{wilhelm_chemical_2000, plesa_stochastic_2023,soloveichik_dna_2010, chen_programmable_2013,thachuk_implementing_2019}. 
This approximation is achieved by introducing into the abstract CRNs
suitable additional chemical species and reactions.
In particular, one possible implementation
is via DNA strand-displacement \cite{soloveichik_dna_2010, chen_programmable_2013}, 
or enzyme-aided DNA systems~\cite{thachuk_implementing_2019}. 
For more details, see Section~\ref{sec:multi_stable} 
and Appendix~\ref{sec:implementation}.

\subsection{Artificial neural networks (ANNs)}
Artificial neural networks (ANNs) are systems of connected
processing units called artificial neurons.
In this paper, we consider a particular 
type of artificial neurons called 
the \emph{perceptron}~\cite{rosenblatt1957perceptron}. 
Given a set of inputs, a perceptron first applies an affine
function, followed by a suitable non-linear function, 
called the activation function, to produce a single output.
More precisely, reusing some of the symbols 
from Section~\ref{sec:crns_background}, 
let $x_1,x_2,\ldots,x_N \in \mathbb{R}$
be the input values, $\omega_{1},\omega_{2},
\ldots,\omega_{N} \in \mathbb{R}$ the weights,
$\theta\in \mathbb{R}$ a bias, and 
$\sigma : \mathbb{R} \to \mathbb{R}$ a suitable
non-linear function. Then, the perceptron is a function
$y : \mathbb{R}^N \to \mathbb{R}$ defined as
\begin{align}
y(x_1,\ldots,x_N) &
=\sigma \left(\sum_{j=1}^N \omega_{j} x_j + \theta \right).
\label{eq:perceptron}
\end{align}

\textbf{Chemical perceptron}. 
A natural question arises: is there a CRN with single species 
$Y$ such that its RRE has a unique stable equilibrium 
of the form~(\ref{eq:perceptron})? 
Such an RRE has been put forward and analysed
in~\cite{anderson_reaction_2021}, and is given by
\begin{align}
\frac{\mathrm{d} y}{\mathrm{d}t} & = 
\gamma + \left(\sum_{j=1}^N \omega_{j} x_j + \theta \right) y - y^2,
\label{eq:smoothmax_perceptron_ode}
\end{align}
where $x_1,x_2,\ldots,x_N$ are parameters.
Since $\sum_{j=1}^N \omega_{j} x_j + \theta$ has a fixed sign,
one can readily use Table~\ref{table:example_CRN} to
write down a CRN corresponding to~(\ref{eq:smoothmax_perceptron_ode}).
We call the species $Y$ with concentration $y$ satisfying~(\ref{eq:smoothmax_perceptron_ode}) a \emph{chemical perceptron}. 
Setting the left-hand side in~(\ref{eq:smoothmax_perceptron_ode})
to zero, one finds that the perceptron concentration 
has a globally stable equilibrium $y^*$, given by
\begin{align}
    y^* & = \sigma_{\gamma}\left(\sum_{j=1}^N \omega_{j} x_j + \theta \right)
    \equiv \frac{1}{2}\left[
    \left(\sum_{j=1}^N \omega_{j} x_j +\theta \right) + \sqrt{\left( \sum_{j=1}^N\omega_{j} x_j + \theta\right)^2 + 4\gamma} \right]. 
    \label{eq:smu_activation}
\end{align}
We call $\sigma_{\gamma}$  with $\gamma >0$
a \emph{chemical activation function}. In this paper, we allow any $\gamma > 0$, 
but note that $\sigma_{\gamma}$ approaches the rectified linear unit (ReLU) activation function in the special case as $\gamma \to 0$~\cite{anderson_reaction_2021},
which is a common choice in the ANN literature~\cite{lecun_deep_2015}.

\section{Recurrent neural chemical reaction networks (RNCRNs)} 
\label{sec:results}
Let us consider a \emph{target} ODE system
with initial conditions, given by
\begin{align}
\frac{\mathrm{d} \bar{x}_i}{\mathrm{d} t} & = f_i(\bar{x}_1,\ldots,\bar{x}_N),  
\hspace{0.5cm} \bar{x}_i(0) = a_i \ge 0, 
\; \; \; \textrm{for } i = 1, 2, \ldots, N.
\label{eq:target_ODEs}
\end{align}
In what follows, we call the ODE right-hand side
$(f_1,f_2,\ldots,f_N)$ a \emph{vector field}, 
which we assume is sufficiently smooth.
Without loss of generality, we assume 
that~(\ref{eq:target_ODEs}) has desirable dynamical 
features in the positive orthant $\mathbb{R}_{>}^N$.
If such features are located elsewhere, 
a suitable affine change of coordinates can be 
used to move these features to the positive orthant~\cite{plesa_chemical_2016}. 

We wish to find a neural CRN with some chemical species 
$X_1,\dots,X_N$ whose concentrations approximate the
solutions of the target ODEs~(\ref{eq:target_ODEs}). 
Inspired by the chemical perceptron~(\ref{eq:smoothmax_perceptron_ode}), 
let us consider the RREs and initial conditions given by
\begin{align}
\frac{\mathrm{d} x_i}{\mathrm{d} t} 
& = \beta_{i} +  x_i \sum_{j=1}^M \alpha_{i,j} y_j,  && x_i(0) = a_i, 
\; \; \; \textrm{for } i = 1, 2, \ldots, N, \nonumber \\
\frac{\mathrm{d} y_j}{\mathrm{d} t} & = \frac{\gamma}{\mu} 
+ \frac{\theta_{j}}{\mu} y_j
+ \left(\sum_{i=1}^N \frac{\omega_{j,i}}{\mu} x_i\right) y_j 
- \frac{1}{\mu} y_j^2, 
&& y_j(0) = b_j, 
\; \; \; \textrm{for } j = 1, 2, \ldots, M.
\label{eq:single_layer_RRE}
\end{align}
We call the CRN corresponding to the 
RREs~(\ref{eq:single_layer_RRE}) a 
\emph{recurrent neural chemical reaction network} (RNCRN),
and display it schematically in Figure~\ref{fig:RNCRN}. 
The RNCRN consists of two sub-networks:
the executive system and the neural system.
The executive system contains chemical reactions which directly change 
the \emph{executive} species $X_1,X_2,\ldots,X_N$.
Note that the initial conditions for the executive species
from~(\ref{eq:single_layer_RRE}) match the 
target initial conditions from~(\ref{eq:target_ODEs}).
On the other hand, the neural system contains 
the reactions which directly influence 
the auxiliary species $Y_1,Y_2,\ldots,Y_M$, 
for which we allow arbitrary initial conditions
$b_1,b_2,\ldots,b_M \ge 0$. These species
can be formally identified as chemical perceptrons, 
see~(\ref{eq:smoothmax_perceptron_ode}).
However, let us stress that the RRE~(\ref{eq:smoothmax_perceptron_ode}) depends on the \emph{parameters} $x_1,x_2,\ldots,x_N$.
In contrast, the concentrations of chemical perceptrons from the RNCRN
depend on the time-dependent executive \emph{variables} $x_1(t),x_2(t),\ldots,x_N(t)$,
which in turn depend on the perceptron concentrations 
$y_1(t),y_2(t),\ldots,y_M(t)$.
In other words, there is a feedback between the executive
and neural systems, giving the RNCRN a recurrent character.
This feedback is catalytic in nature: the chemical perceptrons 
are catalysts in the executive system;
similarly, executive species are catalysts in the neural system.
 
\textbf{Main result: Universal approximation.} 
We now wish to choose the parameters in the RNCRN
so that the concentrations of the executive species 
$x_i(t)$ from~(\ref{eq:single_layer_RRE}) are close to the 
target variables $\bar{x}_i(t)$ from~(\ref{eq:target_ODEs}).
Key to achieving this match is the parameter $\mu > 0$,
which sets the speed at which the perceptrons equilibrate 
relative to the executive system. 
This speed can be formally set to be infinite by 
multiplying the RREs for the perceptrons
in~(\ref{eq:single_layer_RRE}) by $\mu$
and then fixing $\mu = 0$. These RREs 
then become algebraic equations, whose solutions 
are given by $y_j^* = \sigma_{\gamma} 
(\sum_{i=1}^N \omega_{j,i} x_i + \theta_{j})$, 
where $\sigma_{\gamma}$ is of the form~(\ref{eq:smu_activation}).
We then say that the chemical perceptrons 
are in the \emph{quasi-static} state.
In this case, the executive species 
are governed by the \emph{reduced} ODEs, given by
\begin{align}
\frac{\mathrm{d} \tilde{x}_i}{\mathrm{d} t} & = 
g_i(\tilde{x}_1,\ldots,\tilde{x}_N) = \beta_i + \tilde{x}_i \sum_{j=1}^M \alpha_{i,j} 
\sigma_{\gamma} \left(\sum_{k=1}^N \omega_{j,k} \tilde{x}_k + \theta_{j} \right),  
\;\;  \tilde{x}_i(0) = a_i,
\; \; \textrm{for } i = 1, 2, \ldots, N.
\label{eq:single_layer_RRE_reduced}
\end{align}
This reduced system allows us to prove that the RNCRN can in principle be 
fine-tuned to execute the target dynamics arbitrarily closely.
In particular, to achieve this task, we follow two steps.
 
Firstly, we assume that the chemical perceptrons are in the quasi-static state,
i.e. we consider the reduced system~(\ref{eq:single_layer_RRE_reduced}). 
Using the classical (static) theory from ANNs, 
it follows that the rate coefficients from the RNCRN
can be fine-tuned so that the vector field from the 
reduced system~(\ref{eq:single_layer_RRE_reduced})
is close to that of the target system~(\ref{eq:target_ODEs}).
To ensure a good vector field match, 
one in general requires sufficiently many chemical perceptrons.
We call this first step the \emph{quasi-static approximation}. 

Secondly, we disregard the assumption that the chemical perceptrons
are in the quasi-static state, i.e. we consider 
the full system~(\ref{eq:single_layer_RRE}).
Nevertheless, we substitute into the full system
the rate coefficients found in the first step.
Using perturbation theory, it follows that,
under this parameter choice, the concentrations of the executive species 
from the full system~(\ref{eq:single_layer_RRE}) approximate the 
dependent variables from the target system~(\ref{eq:target_ODEs}) arbitrarily closely, 
provided that the chemical perceptrons fire sufficiently (but finitely) fast. 
We call this second step the \emph{dynamical approximation}. 

In summary, the RNCRN induced by~(\ref{eq:single_layer_RRE}) 
with sufficiently many chemical perceptrons ($M \ge 1$ large enough)
which act sufficiently fast ($\mu > 0$ small enough)
can execute any target dynamics.
This universal approximation result is stated
rigorously and proved in 
Appendix~\ref{app:single_layer_RNCRN}.

\section{Examples}
\label{sec:examples}
In Section~\ref{sec:results}, we have outlined 
a two-step procedure used to prove that 
 RNCRNs can theoretically execute any desired dynamics. 
Aside from being of theoretical value, these two steps
also form a basis for a practical method to train RNCRNs, 
which is presented as Algorithm~\ref{algh:RNCRN}.
In this section, we show that Algorithm~\ref{algh:RNCRN}
can be used to train RNCRNs with a relatively small number of 
perceptrons $M$ and moderate perceptron speed $\mu$ 
to achieve predefined multi-stability, oscillations, and chaos.

\setcounter{table}{1}
\begin{table}[h]
\renewcommand\tablename{Algorithm}\setcounter{table}{0}
\hrule
\vskip 2.5 mm
Fix a target system~(\ref{eq:target_ODEs})
and target compact sets 
$\mathbb{K}_1, \mathbb{K}_2, \ldots, \mathbb{K}_N \subset (0,+\infty)$.
Fix also the rate coefficients $\beta_1,\beta_2,\ldots,\beta_N \ge 0$ 
and $\gamma > 0$ in the RNCRN system~(\ref{eq:single_layer_RRE}). 
\begin{enumerate}
\item[\textbf{(a)}] \textbf{Quasi-static approximation}.
Fix a tolerance $\varepsilon > 0$.
Fix also the number of perceptrons $M \ge 1$. Using the backpropagation algorithm~\cite{rumelhart_learning_1986}, 
find the coefficients $\alpha_{i,j}^*, \theta_{j}^*, \omega_{j,i}^*$ 
for $i = 1, 2, \ldots, N$, $j = 1,2,\ldots,M$,
such that (mean-square) distance 
between 
$(f_i(x_1,x_2,\ldots,x_N) - \beta_i)/x_i$ and 
$\sum_{j=1}^M \alpha_{i,j}^* 
\sigma_{\gamma} \left(\sum_{k=1}^N 
\omega_{j,k}^* x_k + \theta_{j}^* \right)$
is within the tolerance for $(x_1,x_2,\ldots,x_N) \in 
 \mathbb{K}_1 \times \mathbb{K}_2 \times \ldots \times\mathbb{K}_N$. If the tolerance $\varepsilon$ is not met,
then repeat step \textbf{(a)} with $M + 1$.

\item[\textbf{(b)}] \textbf{Dynamical approximation}.
Substitute $\alpha_{i,j} = \alpha_{i,j}^*$,
$\theta_{j} = \theta_{j}^*$, $\omega_{j,i} = \omega_{j,i}^*$ into the RNCRN~(\ref{eq:single_layer_RRE}). 
Fix the initial conditions $a_1,a_2,\ldots,a_M \ge 0$, 
$b_1,b_2,\ldots,b_M \ge 0$, and time $T > 0$.
Fix also the speed $0 < \mu \ll 1$ of the perceptrons.
Numerically solve the target system~(\ref{eq:target_ODEs})
and the RNCRN system~(\ref{eq:single_layer_RRE}) 
over the desired interval $[0, T]$.
Time $T > 0$ must be such that 
$\bar{x}_i(t), x_i(t)  \in \mathbb{K}_i$
for all $t \in [0,T]$ for $i = 1, 2,\ldots, N$.
If $\bar{x}_i(t)$ and $x_i(t)$ are sufficiently close
according to a desired criterion for all $i$, 
then terminate the algorithm.
Otherwise, repeat step \textbf{(b)} with a smaller $\mu$.
If no desirable $\mu$ is found, then go back to
step \textbf{(a)} and choose a smaller $\varepsilon$.
\end{enumerate}
\hrule
\caption{{\it Two-step algorithm for training the \emph{RNCRN}.
See~\emph{Appendix}~\ref{sec:pseudocode} for pseudocode.}}
\label{algh:RNCRN}
\end{table}

\subsection{Multi-stability}
\label{sec:multi_stable}
Let us consider the one-variable target ODE
\begin{align}
\frac{\mathrm{d} \bar{x}_1}{\mathrm{d} t}  &= f_1(\bar{x}_1) = \sin(\bar{x}_1), 
\; \; \; \bar{x}_1(0) = a_1 \ge 0.
\label{eq:multistable_ex}
\end{align}
This system has infinitely many equilibria,
which are given by $\bar{x}_1^* = n \pi$
for integer values of $n$. The equilibria
with even $n$ are unstable, while those with 
odd $n$ are stable. 

\textbf{Bi-stability}. 
Let us now apply Algorithm~\ref{algh:RNCRN}
on the target system~(\ref{eq:multistable_ex}),
in order to find an associated bi-stable RNCRN.
In particular, let us choose the target region to be
$\mathbb{K}_1 = [1,12]$, which includes
two stable equilibria, $\pi$ and $3 \pi$, 
and one unstable equilibrium, $2 \pi$. 
We arbitrarily fix the free rate coefficients 
to $\beta_1 = 0$ and $\gamma = 1$.

\emph{\emph{(a)} Quasi-static approximation}.  
Let us apply the first step from Algorithm~\ref{algh:RNCRN}. 
We find that the tolerance $\varepsilon \approx 10^{-3}$ 
can be met with $M = 3$ chemical perceptrons, if the rate coefficients
$\alpha_{1,j}, \theta_{j},\omega_{j,1}$ in the reduced ODE~(\ref{eq:single_layer_RRE_reduced}) are chosen as follows:
\begin{align}
\frac{\mathrm{d} \tilde{x}_1}{\mathrm{d} t} = 
g_1(\tilde{x}_1) = 
& -0.983 \sigma_{1}\left(-1.167\tilde{x}_1 + 7.789 \right) \tilde{x}_1  \nonumber \\ 
& -0.050 \sigma_{1}\left(0.994 \tilde{x}_1 -1.918 \right) \tilde{x}_1 \nonumber \\
& + 2.398 \sigma_{1}\left(-0.730 \tilde{x}_1 +  3.574\right) \tilde{x}_1.
\label{eq:single_layer_RRE_reduced_sin_bi_stable}
\end{align}
In Figure~\ref{fig:bi_stable_sin_dynamics_and_traj}(a), 
we display the vector fields of the target~(\ref{eq:multistable_ex}) 
and reduced system~(\ref{eq:single_layer_RRE_reduced_sin_bi_stable}).
One can notice an overall good match within the desired set
$\mathbb{K}_1 = [1,12]$, shown as the unshaded region.
As expected, the approximation is poor outside of $\mathbb{K}_1$;
furthermore, for the given tolerance, the accuracy 
is also reduced near the right end-point of the target set.

\begin{figure}[ht]
    \centering
    \includegraphics[width=0.69\columnwidth]{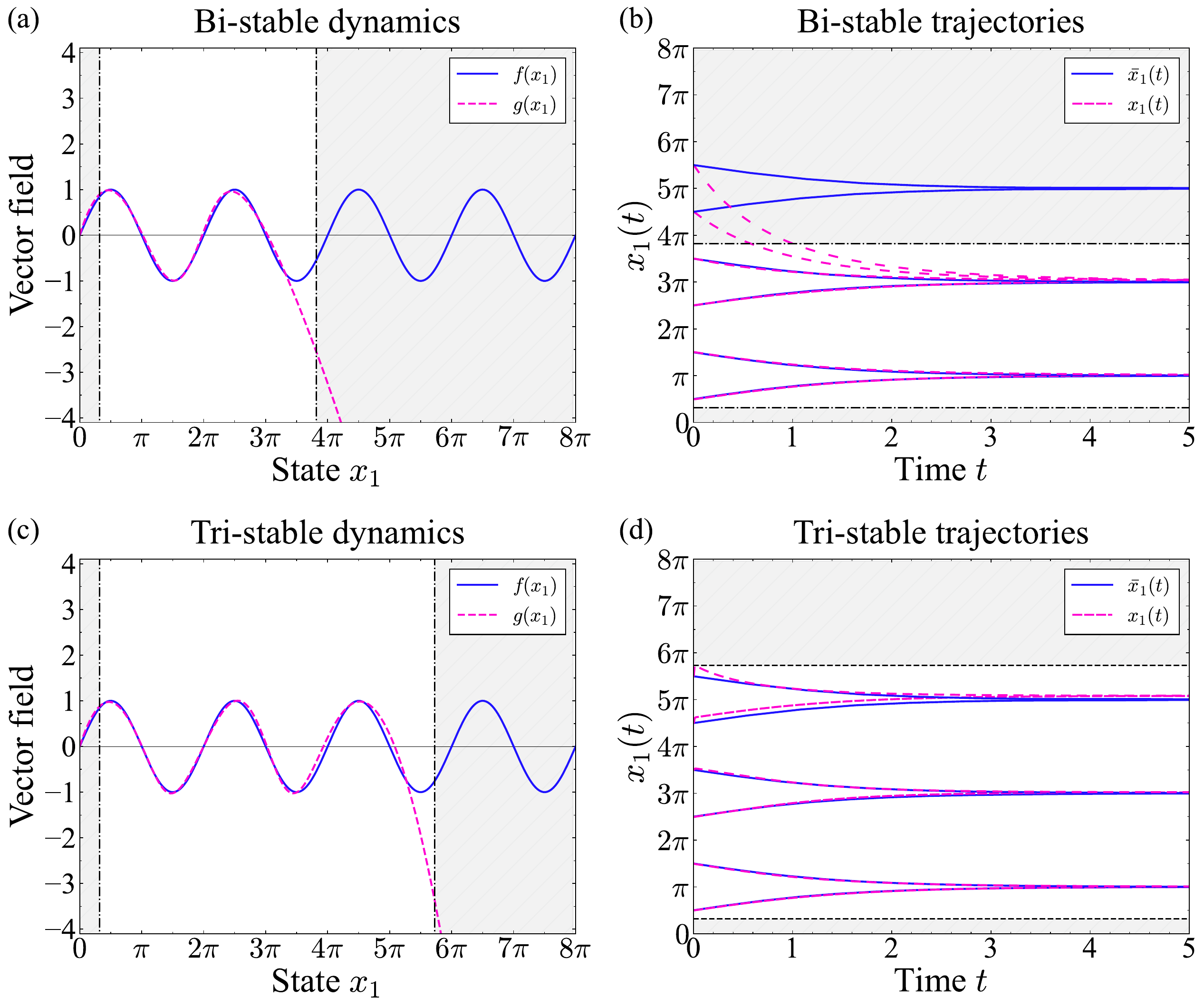}
    \caption{RNCRN approximations of the multi-stable target system~(\ref{eq:multistable_ex}). (a) The vector field of the target system~(\ref{eq:multistable_ex}) and reduced bi-stable system~(\ref{eq:single_layer_RRE_reduced_sin_bi_stable}) when $\mathbb{K}_1 = [1,12]$. (b) Solutions $\bar{x}_1(t)$ of the target system~(\ref{eq:multistable_ex}), and $x_1(t)$ of the full system~(\ref{eq:single_layer_RRE_full_sin_bi_stable}) with $\mu = 0.01$. Analogous plots are shown in panels (c) and (d) 
    for the tri-stable RNCRN over $\mathbb{K}_1 = [1,18]$ 
    presented in Appendix~\ref{app:app_tri_stable},
    with reduced and full ODEs given respectively by~(\ref{eq:single_layer_RRE_reduced_sin_tri_stable})
    and (\ref{eq:single_layer_RRE_full_sin_tri_stable}), 
    with coefficients~(\ref{eq:tri_stable_coeff})
    and $\mu = 0.01$. In panels (b) and (d), 
    the initial concentrations of all chemical perceptrons are set to zero.
    }
    \label{fig:bi_stable_sin_dynamics_and_traj}
\end{figure}

\emph{\emph{(b)} Dynamical approximation}. Let us apply the second step from Algorithm~\ref{algh:RNCRN}. Using the coefficients from~(\ref{eq:single_layer_RRE_reduced_sin_bi_stable}), 
we now form the full ODEs~(\ref{eq:single_layer_RRE}):
\begin{align}
\frac{\mathrm{d} x_1}{\mathrm{d}t} & = -0.983 x_1 y_1 - 0.050 x_1 y_2 + 2.398 x_1 y_3,
 && x_1(0) = a_1 \in \mathbb{K}_1, \nonumber\\
\frac{\mathrm{d} y_1}{\mathrm{d}t} & = \frac{1}{\mu} + \frac{7.789}{{\mu}} y_1 - 
\frac{1.167}{\mu} x_1 y_1 - \frac{1}{\mu} y_1^2, 
&& y_1(0) = b_1 \geq 0, \nonumber \\
\frac{\mathrm{d} y_2}{\mathrm{d}t} & = 
\frac{1}{\mu} - \frac{1.918}{\mu} y_2 + \frac{0.994}{\mu} x_1 y_2 - \frac{1}{\mu} y_2^2, && y_2(0) = b_2  \geq 0, \nonumber \\
\frac{\mathrm{d} y_3}{\mathrm{d}t} & = \frac{1}{\mu} + \frac{3.574}{\mu} y_3 - \frac{0.730}{\mu} x_1 y_3 - \frac{1}{\mu} y_3^2, 
&& y_3(0) = b_3  \geq 0.
\label{eq:single_layer_RRE_full_sin_bi_stable}
\end{align}
We fix the initial conditions arbitrarily to 
$b_1 = b_2 = b_3 = 0$, the desired final-time to $T = 5$
and the perceptron speed to $\mu=0.01$. 
We then numerically integrate~(\ref{eq:multistable_ex}) and (\ref{eq:single_layer_RRE_full_sin_bi_stable}) over 
$t \in [0,5]$ for a fixed initial concentration $a_1  \in \mathbb{K}_1$
of the executive species, 
and plot the solutions $\bar{x}(t)$ and $x(t)$;
the same computations are then repeated for various values of $a_1$,
which we display in 
Figure~\ref{fig:bi_stable_sin_dynamics_and_traj}(b).
One can notice that the RNCRN underlying~(\ref{eq:single_layer_RRE_full_sin_bi_stable}) with $\mu=0.01$ accurately approximates the solutions 
of the target system; in particular, we observe bi-stability. 

\textbf{Tri-stability}. 
One can similarly apply Algorithm~\ref{algh:RNCRN}
to~(\ref{eq:multistable_ex}) to obtain tri-stable RNCRNs.
In particular, let us consider a larger
set $\mathbb{K}_1 = [1,18]$, which includes
three stable equilibria, $\pi$, $3 \pi$, and $5 \pi$, 
and two unstable equilibria, $2 \pi$ and $4 \pi$.
Applying Algorithm~\ref{algh:RNCRN}, we find 
an RNCRN with $M = 4$ chemical perceptrons,
which with $\mu = 0.01$ displays the desired tri-stability,
as displayed in Figure~\ref{fig:bi_stable_sin_dynamics_and_traj}(c)--(d);
see Appendix~\ref{app:app_tri_stable} for more details.
In a similar manner, Algorithm~\ref{algh:RNCRN}
can be used to achieve RNCRNs with arbitrary number of stable equilibria; see Appendix~\ref{app:app_tri_stable}.

\textbf{DNA-strand-displacement implementation}.
As outlined in Section~\ref{sec:crns_background}, 
the RNCRN consists of abstract reactions, 
which must be appropriately enlarged 
for experimental implementations.
As a demonstration, let us implement the RNCRN
corresponding to the RREs~(\ref{eq:single_layer_RRE_full_sin_bi_stable})
via an experimentally feasible DNA-strand-displacement system~\cite{soloveichik_dna_2010}.
In this framework, e.g. the abstract reaction 
$X_1 + Y_2 \xrightarrow[]{1.988} X_1 + 2 Y_2$,
induced by the term $(0.994/\mu) x_1 y_2$ with $\mu = 0.5$ 
from~(\ref{eq:single_layer_RRE_full_sin_bi_stable}), 
is approximated with a system of $4$ reactions
involving $7$ species, given by 
\begin{align} 
X_1 + L_{1} & \xrightleftharpoons[\textrm{fast}]{1.988} H_{1} + B_{1}, 
\; \; \; \; 
Y_2 + H_{1} \xrightarrow{\textrm{fast}} O_{1},  
\; \; \; \; 
O_{1} + T_{1} \xrightarrow{\textrm{fast}} X_1 + 2 Y_2. 
\label{eq:DNA_example}
\end{align}
Here, we denote the two irreversible reactions 
$X_1 + L_{1} \xrightarrow{1.988} H_{1} + B_{1}$
and 
$H_{1} + B_{1}\xrightarrow{\textrm{fast}} X_1 + L_{1}$
as a single reversible one; 
furthermore, some of the rate coefficients
are described qualitatively as sufficiently ``fast'',
see Appendix~\ref{sec:implementation} for more details.
Species $X_1$, $Y_1$, $B_{1}$ and $O_1$ can be interpreted  
as suitable single-stranded DNA molecules, 
while $L_{1}$, $H_{1}$, $T_{1}$ 
as double-stranded DNA complexes;
see Figure~\ref{fig:bi_stable_dna_implementation}(a), 
where we display the DNA-based approximation~(\ref{eq:DNA_example}) schematically. 

In Appendix~\ref{sec:implementation}, we approximate
every reaction from the RNCRN induced by RREs~(\ref{eq:single_layer_RRE_full_sin_bi_stable})
 with a system of DNA-based reactions, 
such as~(\ref{eq:DNA_example});
furthermore, we fix the perceptron speed to $\mu = 0.5$.
For the resulting network, which we call 
the DNA-RNCRN, we assume that 
time and concentrations have units;
furthermore, we rescale these quantities so that 
the concentrations of the DNA-species
and rate coefficients fit within
the physically admissible ranges~\cite{soloveichik_dna_2010}.
By design, the DNA-RNCRN relies on suitable 
DNA species (e.g. $L_1, B_1, T_1$ from~(\ref{eq:DNA_example}))
being present at sufficiently high concentrations.
Being depleted over time, these fuel species
must in general be replenished, e.g. by coupling the DNA-RNCRN to a chemostat~\cite{soloveichik_dna_2010}. 
In this paper, we do not model the replenishment
process explicitly; instead, for simplicity, we 
assume that the fuel species are held constant.
The resulting approximating \emph{constant-fuel} RNCRN is given 
by~(\ref{eq:dna_simple_rncrn})
in Appendix~\ref{sec:implementation}. 
In Figure~\ref{fig:bi_stable_dna_implementation}(b), we display the concentration of the executive species from the
constant-fuel DNA-RNCRN, demonstrating that bistability 
is accurately replicated within experimentally feasible range. 
See Appendix~\ref{sec:implementation} for more details, 
including a simulation of the DNA-RNCRN without
the constant-fuel assumption.

\begin{figure}[H]
    \centering
    \includegraphics[align=t, width=0.66\columnwidth, height=4.0cm, trim={0cm 0cm 1.0cm, 0cm},clip]{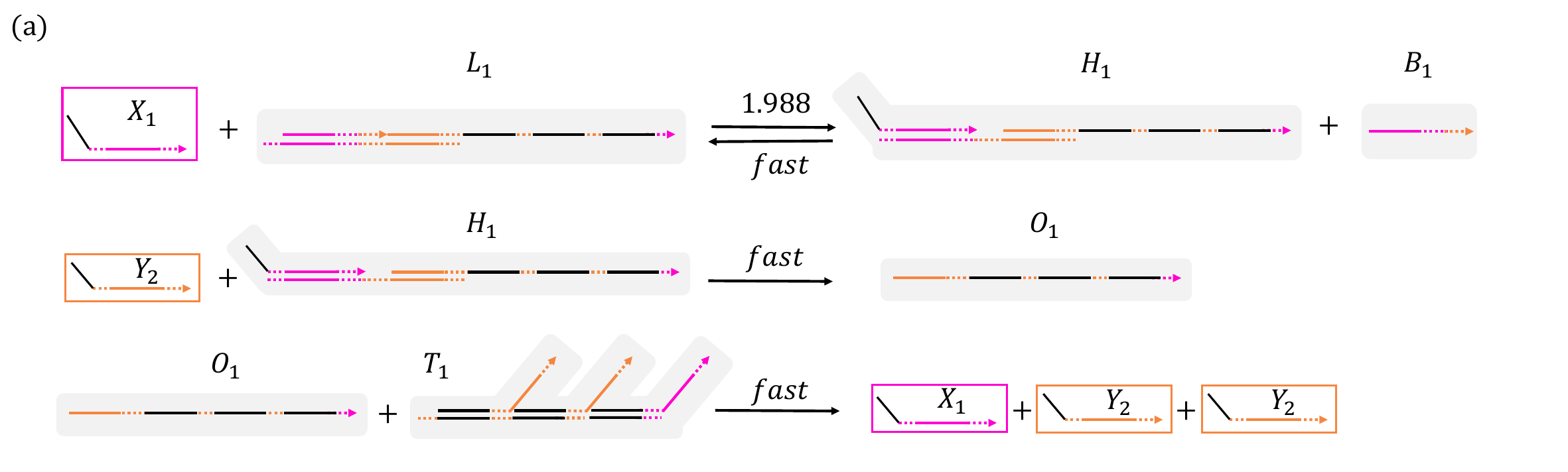}
    \includegraphics[align=t, width=0.33\columnwidth, height=4.5cm ]{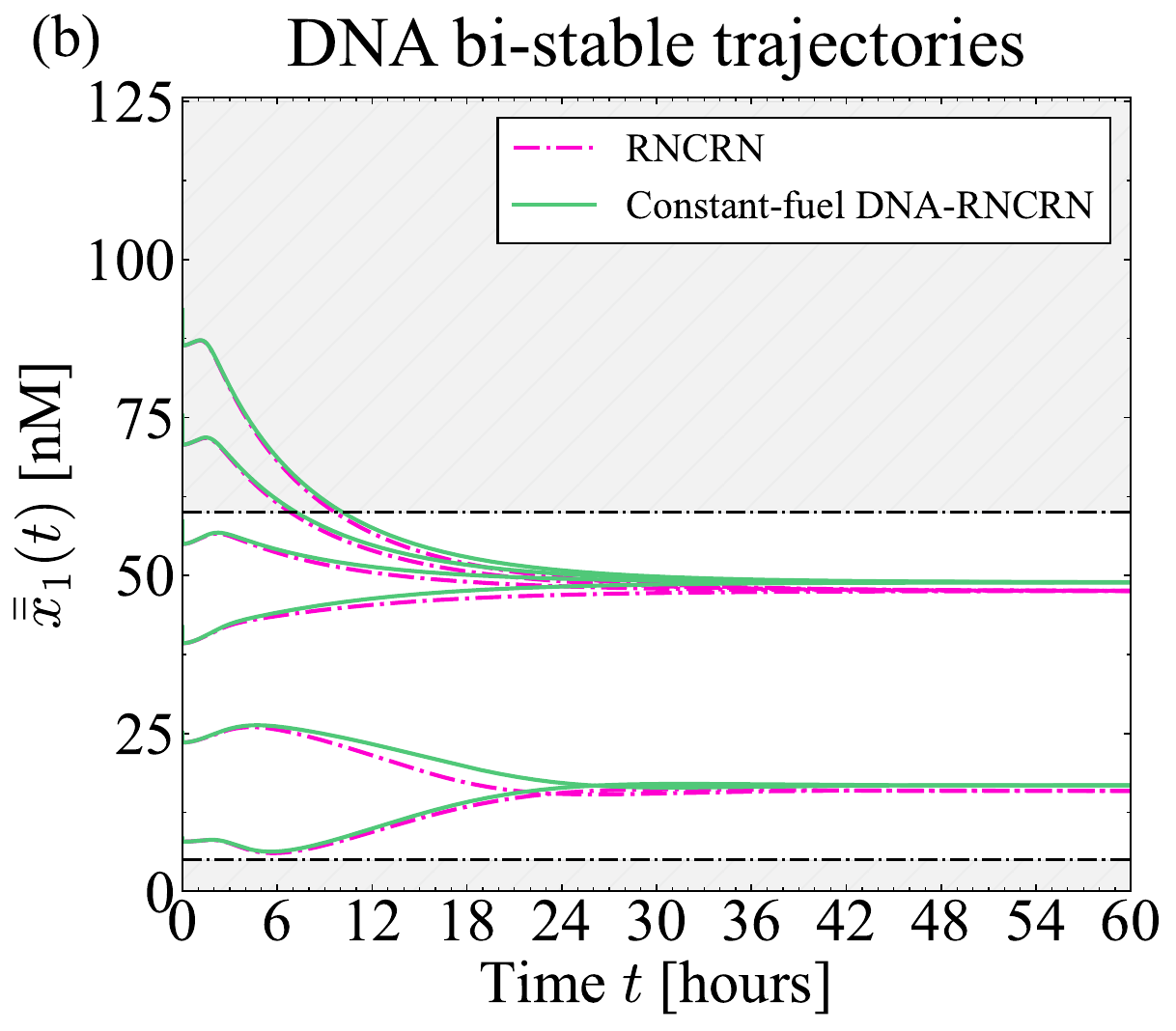}
    \caption{DNA-strand-displacement implementation of the RNCRN
    with RREs~(\ref{eq:single_layer_RRE_full_sin_bi_stable})
    with $\mu = 0.5$. 
    (a) Domain-level representation of the DNA system~(\ref{eq:DNA_example}), which approximates the bi-molecular reaction $X_1 + Y_2 \xrightarrow[]{1.988} X_1 + 2 Y_2$. 
    (b) Shown in purple is the concentrations $\bar{\bar{x}}_1(t)$
    of the executive species from a rescaled version of the RREs~(\ref{eq:single_layer_RRE_full_sin_bi_stable}), 
    given by~(\ref{eq:single_layer_RRE_full_sin_bi_stable_rescaled}) in Appendix~\ref{sec:implementation}. 
    Also shown in green is the corresponding concentration of the executive species from the constant-fuel RNCRN given by~(\ref{eq:dna_simple_rncrn}) in Appendix~\ref{sec:implementation}. The concentrations are measured in nanomoles (nM), while time in hours.
    }
    \label{fig:bi_stable_dna_implementation}
\end{figure}

\subsection{Oscillations}
\label{sec:oscillations}
Let us consider the two-variable target ODE system  
\begin{align}
\frac{\mathrm{d} \bar{x}_1}{\mathrm{d} t} &= f_1(\bar{x}_1, \bar{x}_2) = 6 + 4 J_0\left(\frac{3}{2}\bar{x}_1\right) - \bar{x}_2,
&& \bar{x}_1(0) = a_1 \geq 0, \nonumber\\
    \frac{d \bar{x}_2}{d t} &= f_2(\bar{x}_1, \bar{x}_2) = \bar{x}_1 -4,
    && \bar{x}_2(0) = a_2 \geq 0,
    \label{eq:bessel_osc}
\end{align}
were $J_0(\bar{x}_1)$ is the Bessel function of the first kind. Numerical simulations suggest that~(\ref{eq:bessel_osc})
has an isolated oscillatory solution,
which we display as the blue curve in the $(\bar{x}_1,\bar{x}_2)$-space
in Figure~\ref{fig:bessel_osc_traj_cherr_net}(a);
also shown as grey arrows is the vector field, 
and as the unshaded box we display the 
desired region of interest $\mathbb{K}_1
\times \mathbb{K}_2 = [1,8] \times [2, 10]$.
In Figure~\ref{fig:bessel_osc_traj_cherr_net}(c),
we show as solid blue curves this oscillatory solution
in the $(t,\bar{x}_1)$- and $(t,\bar{x}_2)$-space.

Using Algorithm~\ref{algh:RNCRN}, we find an RNCRN
with $M = 6$ chemical perceptrons, whose dynamics
with $\mu = 0.01$ qualitatively matches
that of the target system~(\ref{eq:bessel_osc})
within $\mathbb{K}_1\times \mathbb{K}_2$;
see Appendix~\ref{sec:app_bessel_osc} for details.
We display the solution of this RNCRN 
as the purple curve in 
Figure~\ref{fig:bessel_osc_traj_cherr_net}(b)--(c). 

\begin{figure}[H]
    \centering
    \includegraphics[width=\columnwidth]{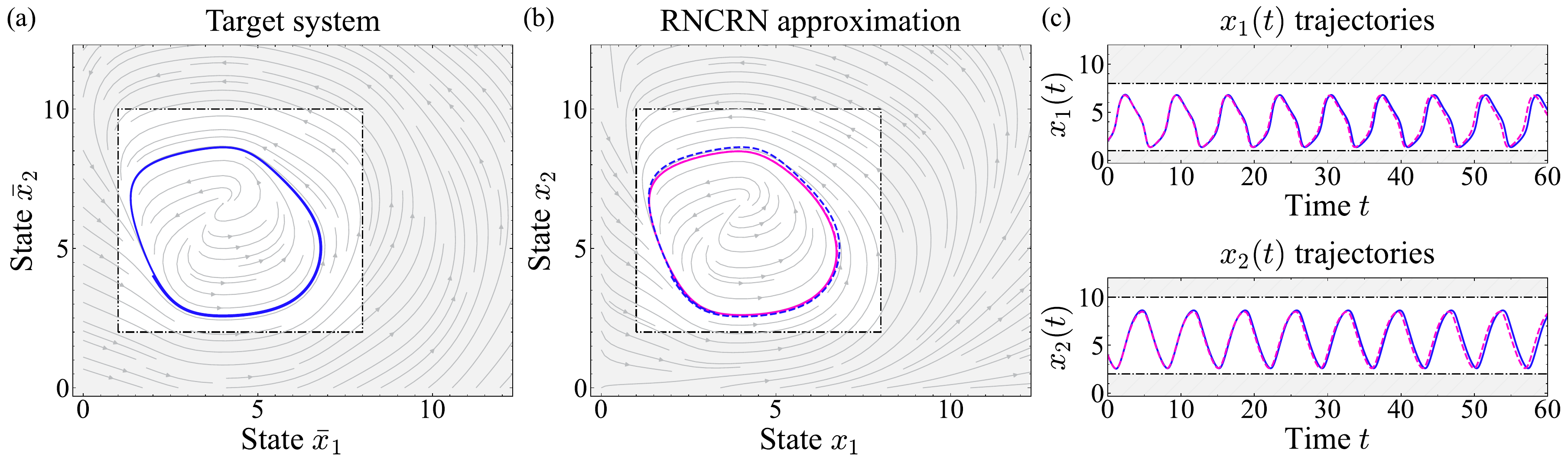}
\caption{RNCRN approximation of the oscillatory target system~(\ref{eq:bessel_osc}). (a) The vector field of~(\ref{eq:bessel_osc}) 
is shown as grey arrows, the target region $\mathbb{K}_1 \times \mathbb{K}_2 = [1,8] \times [2, 10]$ as the unshaded box, and the solution of~(\ref{eq:bessel_osc}) with $\bar{x}_1(0)=2$ and $\bar{x}_2(0)=4$ in blue. 
(b) Analogous plot is shown for the RNCRN from Appendix~\ref{sec:app_bessel_osc}, whose reduced and full ODEs are given respectively by~(\ref{eq:single_layer_RRE_reduced_oscill}) and~(\ref{eq:full_rncrn_oscil}), 
with coefficients~(\ref{eq:coefficents_oscill}) and $\mu = 0.01$.
In particular, displayed as grey arrows is the vector field
of the reduced ODEs~(\ref{eq:single_layer_RRE_reduced_oscill}), 
together with an oscillatory solution of the full ODEs~(\ref{eq:full_rncrn_oscil}) shown in purple; 
for comparison, we also display as dashed blue curve 
the oscillatory solution of~(\ref{eq:bessel_osc}).
(c) Solutions $\bar{x}_1(t)$ and $x_1(t)$, 
and $\bar{x}_2(t)$ and $x_2(t)$. 
The initial concentrations of all perceptrons 
are fixed to zero. 
}
\label{fig:bessel_osc_traj_cherr_net}
\end{figure}

\textbf{Robustness}. When implementing 
RNCRNs, the underlying rate coefficients 
and initial conditions cannot be experimentally fine-tuned 
with perfect accuracy. Therefore, it is of great importance
to study how the dynamics of RNCRNs behaves 
under the parameter perturbations. 
A detailed analysis of this problem is beyond the scope of this paper.
As a step forward, we numerically investigate
sensitivity of the oscillatory solution of the 
RNCRN from Appendix~\ref{sec:app_bessel_osc},
which approximates the target system~(\ref{eq:bessel_osc});
see Appendix~\ref{sec:robustness} for details.
In particular, we perturb all the rate coefficients
in the RNCRN by up to $50\%$, and compute the fraction 
of such perturbed RNCRNs that qualitatively retain 
the periodic solution as a function of the perturbation noise. 
The results for the RNCRN with $\mu = 0.1$
are presented in Figure~\ref{fig:bessel_osc_traj_cherr_net_robustness}(a)
in purple, with example trajectories in Figure~\ref{fig:bessel_osc_traj_cherr_net_robustness}(b); 
we observe that e.g. around $35\%$ of systems 
still oscillate when the
coefficients are perturbed by $10\%$.
Also shown in Figure~\ref{fig:bessel_osc_traj_cherr_net_robustness}(a) 
in green are the results for the underlying quasi-static approximation,
formally obtained by setting $\mu = 0$.
One can notice that this approximation shows a similar
degree of robustness, suggesting that the perceptron speed
 is not a major source of sensitivity.
This observation is confirmed in Figure~\ref{fig:bessel_osc_traj_cherr_net_robustness}(c)--(d), 
which shows that the RNCRN is insensitive to 
a broad range of perturbations in $\mu$. 
Robustness over a broad range of perturbations 
can also be observed with respect to the initial conditions, 
see Figure~\ref{fig:bessel_osc_traj_cherr_net_robustness}(e)--(f).

\emph{Noisy training}. 
Figure~\ref{fig:bessel_osc_traj_cherr_net_robustness}
demonstrates that the RNCRN from Appendix~\ref{sec:app_bessel_osc}, 
while not being pathologically
sensitive to the perturbations in the rate coefficients, 
is nevertheless much more vulnerable to such perturbations
than those in perceptron speed and initial conditions.
Let us now demonstrate that, by slightly modifying Algorithm~\ref{algh:RNCRN}, the rate-coefficient 
robustness can be significantly improved. In particular, step (a)
of Algorithm~\ref{algh:RNCRN}
approximates the target vector
field from~(\ref{eq:bessel_osc}) 
with the deterministic reduced vector field from~(\ref{eq:single_layer_RRE_reduced}).
In Appendix~\ref{app:noisy_training}, we modify this step,
by introducing suitable stochastic perturbations 
into the reduced vector field. 
We apply this noisy version of Algorithm~\ref{algh:RNCRN}
to design a new RNCRN, presented in Appendix~\ref{app:noisy_training},
which also contains $M = 6$ chemical perceptrons
and approximates~(\ref{eq:bessel_osc}).
This noise-trained RNCRN displays a significantly 
better robustness profile, 
as demonstrated in yellow in Figure~\ref{fig:bessel_osc_traj_cherr_net_robustness}(a);
in particular, now almost all of the systems still oscillate
when the rate coefficients are perturbed by $10\%$, 
and more than $60\%$ oscillate when the coefficients
are perturbed by as much as $50\%$.

\begin{figure}[H]
\centering
\includegraphics[width=\columnwidth]{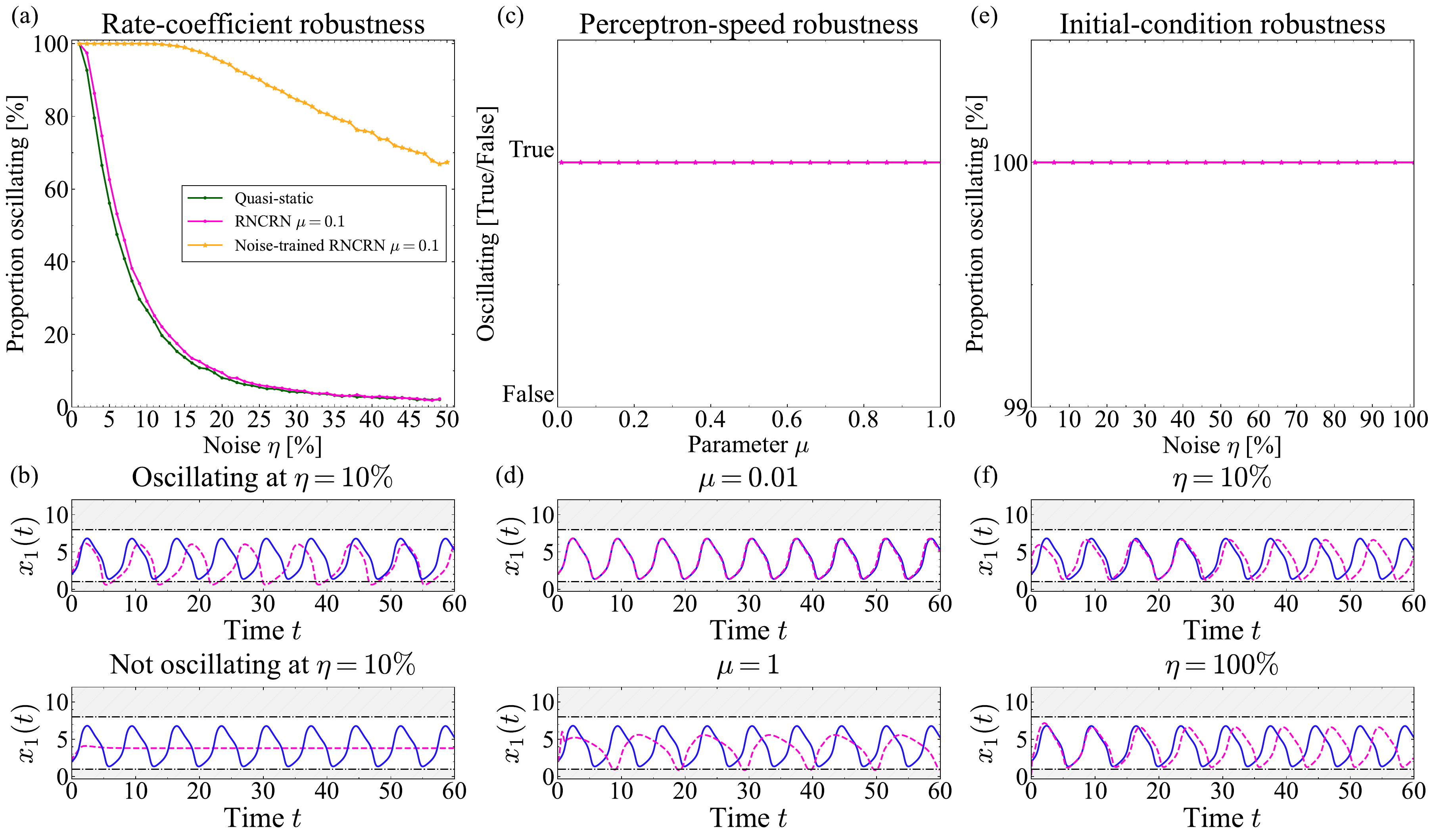}
\caption{Robustness of RNCRNs approximating
oscillatory target system~(\ref{eq:bessel_osc}). 
(a) Proportion of RNCRNs with perturbed
rate coefficients that retain oscillations as a function of 
the perturbation noise. This proportion 
is shown in green for the quasi-static approximation~(\ref{eq:single_layer_RRE_reduced_oscill}), 
in purple for the corresponding RNCRN~(\ref{eq:full_rncrn_oscil})
with $\mu=0.1$, and in yellow for the noise-trained RNCRN~(\ref{eq:full_rncrn_oscil_noise}) with $\mu=0.1$; 
see Appendix~\ref{sec:rate_robustness} and~\ref{app:noisy_training} 
for more details.
Example trajectories of the RNCRN~(\ref{eq:full_rncrn_oscil})
which retain and do not retain oscillations
are shown in purple in panel (b), together with the corresponding
solutions of~(\ref{eq:bessel_osc}) in blue.
(c) Preservation of oscillations in 
the RNCRN~(\ref{eq:full_rncrn_oscil}) as parameter $\mu$ is varied,
with example trajectories shown in panel (d);  
see Appendix~\ref{app:perceptron_speed_robustness} for more details.
(e) Proportion of the RNCRNs~(\ref{eq:full_rncrn_oscil}) 
with perturbed initial conditions that retain oscillations,
with example trajectories shown in panel (f);
see Appendix~\ref{app:init_conc_robustness} for more details.
Each point in panels (a), (c) and (e) was obtained 
by considering $10^4$ independently perturbed RNCRNs.}
\label{fig:bessel_osc_traj_cherr_net_robustness}
\end{figure}

\subsection{Chaos}
As our final example, we consider the three-variable target ODE system
\begin{align}
    \frac{\mathrm{d}\bar{x}_1}{\mathrm{d}t} &=  f_1(\bar{x}_1, \bar{x}_2, \bar{x}_3) =2.5\bar{x}_1(1-1.5\bar{x}_1) -\frac{4 \bar{x}_1\bar{x}_2}{1+3\bar{x}_1}, && \bar{x}_1(0) = 0.25,\nonumber \\
    \frac{\mathrm{d}\bar{x}_2}{\mathrm{d}t} &=  f_2(\bar{x}_1, \bar{x}_2, \bar{x}_3) = -0.4\bar{x}_2+\frac{4 \bar{x}_1\bar{x}_2}{1+3 \bar{x}_1}-\frac{4 \bar{x}_2\bar{x}_3}{1+3 \bar{x}_2},  && \bar{x}_2(0) = 0.25, \nonumber\\
    \frac{\mathrm{d}\bar{x}_3}{\mathrm{d}t} &= f_3(\bar{x}_1, \bar{x}_2, \bar{x}_3) = -0.6\bar{x}_3+ \frac{4 \bar{x}_2\bar{x}_3}{1+3 \bar{x}_2},  && \bar{x}_3(0) = 0.25.
    \label{eq:hasting_powell_ode}
\end{align}
ODEs~(\ref{eq:hasting_powell_ode}) are known as the Hasting-Powell system~\cite{stone_chaotic_2007},
and have been reported to exhibit chaotic behaviour.
We present a portion of the state-space of~(\ref{eq:hasting_powell_ode})
in Figure~\ref{fig:hasting_powell_dynamics_and_traj}(a).
 
Using Algorithm~\ref{algh:RNCRN}, we find 
an RNCRN with $M = 5$ chemical perceptrons to approximate
the dynamics of the Hasting-Powell system~(\ref{eq:hasting_powell_ode})
over the compact set $x_1 \in \mathbb{K}_1$, 
$x_2 \in \mathbb{K}_2$ and $x_2 \in \mathbb{K}_3$
with $\mathbb{K}_1 = \mathbb{K}_2 = \mathbb{K}_3 = [0.01,1]$.
The parameters and the RREs can be found in Appendix~\ref{sec:hasting_powell_app}. 
The state-space for the executive species from the RNCRN is presented 
in Figure~\ref{fig:hasting_powell_dynamics_and_traj}(b)
when $\mu = 0.1$, suggesting presence of a chaotic attractor,
in qualitative agreement with 
Figure~\ref{fig:hasting_powell_dynamics_and_traj}(a).
In Figure~\ref{fig:hasting_powell_dynamics_and_traj}(c), 
we display the underlying time-trajectories. 
Let us note that alignment of the trajectories is
not expected for long intervals of time 
due to the chaotic nature of the target system.
\begin{figure}[ht]
\centering
\includegraphics[width=\columnwidth]{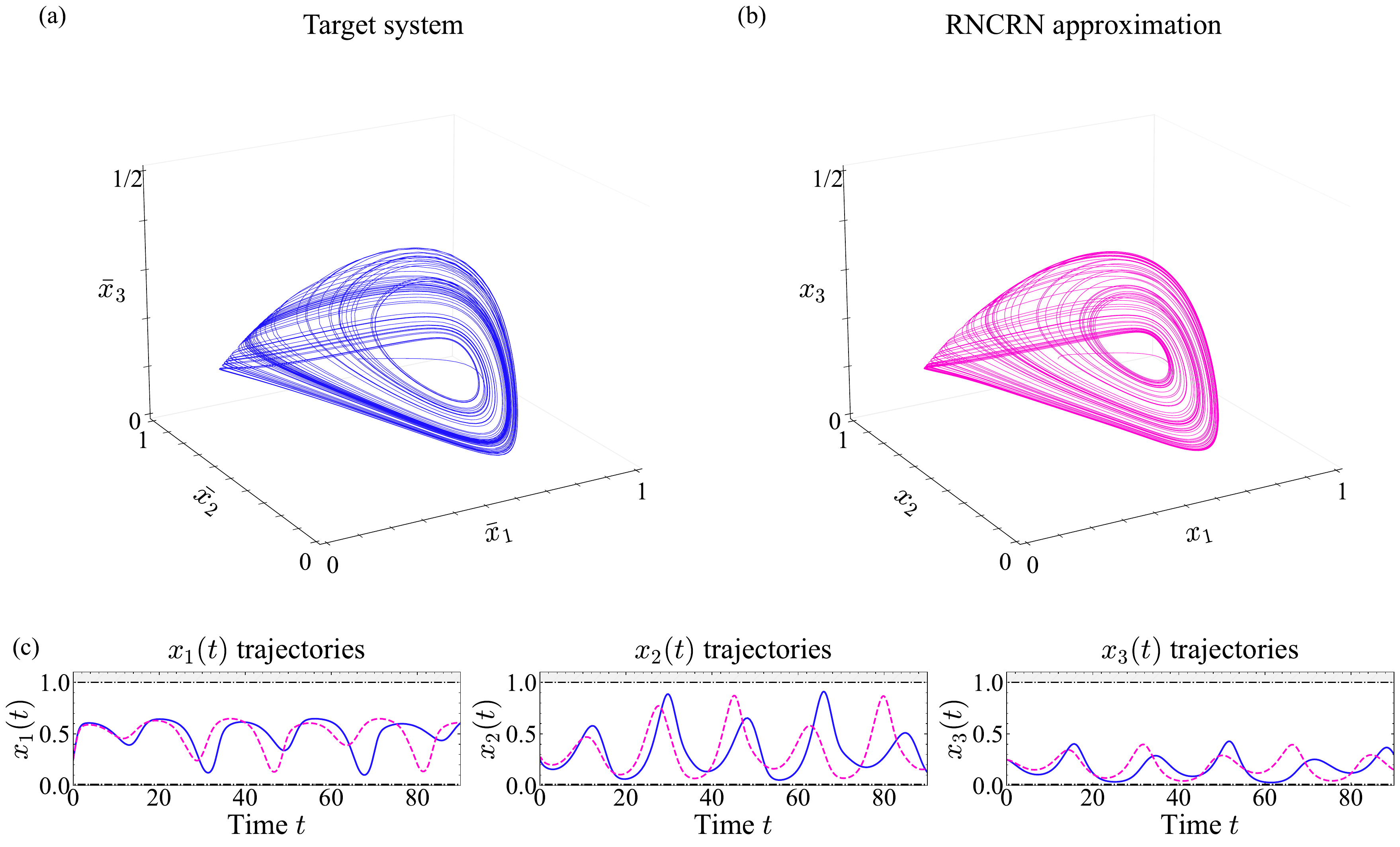}
\caption{RNCRN approximation of the target system~(\ref{eq:hasting_powell_ode}) with a chaotic attractor. 
(a) Solution of~(\ref{eq:hasting_powell_ode}) in the state-space. 
(b) Solution of the RNCRN RREs~(\ref{eq:full_rncrn_chaos}) in the
state-space, with coefficients~(\ref{eq:coeff_chaos}), $\mu = 0.1$, 
initial conditions $x_1(0) = x_2(0) = x_3(0) = 0.25$, 
and zero initial conditions for all perceptrons. 
(c) Target and executive solutions $\bar{x}_i(t)$ 
and $x_i(t)$ for $i = 1,2,3$.}
\label{fig:hasting_powell_dynamics_and_traj}
\end{figure}

\section{Discussion} \label{sec:con_disc}
In this paper, we have introduced a class of chemical reaction networks 
(CRNs) that can be trained to approximate the dynamics of any well-behaved system of ordinary-differential equations (ODEs). These recurrent neural chemical reaction networks (RNCRNs) operate by continually actuating the concentrations of the executive species such that they adhere to the target vector field which has been encoded by a 
faster system of chemical perceptrons.
In Appendix~\ref{app:single_layer_RNCRN}, we have proven
the approximation abilities of RNCRNs. Based on this result,  we have put forward Algorithm~\ref{algh:RNCRN} for training RNCRNs, which relies
on the backpropagation algorithm from the theory
of artificial neural networks (ANNs). Due to the nature of backpropagation, Algorithm~\ref{algh:RNCRN} is not guaranteed to find an optimal solution; nevertheless, it proved effective for the target systems presented in this paper. In particular, in Section~\ref{sec:examples}, we showcased examples of RNCRNs that replicate sophisticated dynamical behaviors with no more than six chemical perceptrons, governed by reactions which fire 
at most two orders of magnitude faster than
those governing the executive species. 

For target systems of ODEs with purely polynomial right-hand sides, 
there exist a number of other methods, which we call \emph{chemical maps}, to design dynamically-equivalent CRNs~\cite{plesa_quasichemical_2025,plesa_chemical_2016,samardzija_nonlinear_1989,poland_cooperative_1993,  holcman_test_2017,plesa_integral_2021, hangos_mass_2011,kowalski_universal_1993}.
CRNs designed using these chemical maps have analytically
determined rate coefficients, i.e. they do not require
numerical methods, such as the backpropagation algorithm,
for this purpose; furthermore, these CRNs
can involve less auxiliary species than RNCRNs.
However, an advantage of RNCRNs
is their ANN-based ability to replicate the dynamics of,
not only ODEs with polynomial,
but also non-polynomial right-hand sides, 
the latter of which can display a richer set of 
dynamical behaviors compared to the former.
In Section~\ref{sec:examples}, we have exploited this fact
by considering example ODEs that contain 
a mixture of polynomial and non-polynomial terms.
Let us also note that the structure of CRNs obtained via chemical maps
depends on the given target ODE system.
On the other hand, RNCRNs have a fixed structure, 
independent of the complexity of the target system, 
only relying on the reactions from Table~\ref{table:example_CRN}; 
this modularity may be beneficial for experimentally building RNCRNs. 

For target systems of ODEs with non-polynomial right-hand sides,
there also exists an alternative scheme.
In particular, under this scheme, one first maps the non-polynomial 
ODEs to polynomial ones~\cite{kerner_universal_1981}, 
which is then followed by an application of a desired chemical map.
Similar to RNCRNs, this scheme also introduces
auxiliary variables to achieve the desired dynamics. 
However, in stark contrast to RNCRNs, 
owing to the map from~\cite{kerner_universal_1981}, 
this alternative scheme displays a particular fragility:
the initial conditions for the auxiliary species are constrained
to particular values, and some deviations may cause
some species concentrations to blow-up.
In Appendix~\ref{sec:compare}, we
demonstrate such chemically hazardous events
when the scheme is applied to the 
target system~(\ref{eq:multistable_ex}).
This fragility not only requires high-precision calibration
of the initial conditions, but also effectively requires the preparation of an entirely different experiment for different executive species initial conditions. In contrast, no such adjustment of the auxiliary species 
is necessary for RNCRNs.

While certain properties of RNCRNs are experimentally desirable,
such as modularity and robustness in initial concentrations
of the auxiliary species
(which form the bulk of initial conditions), 
there are several challenges present in chemistry that are not present 
in traditional electronic-based ANNs; in what follows, 
we focus on DNA-strand-displacement technologies~\cite{soloveichik_dna_2010}.
Firstly, adding a perceptron or a new connection to an ANN is trivial compared to adding a chemical perceptron or a reaction to an RNCRN, which requires 
engineering a chemical species with specific reactivity. 
Secondly, compared to electronic circuits,
chemical systems are much more prone to unintended interactions, 
so that the modularity of each chemical perceptron might break 
down as the network scales in size~\cite{qian_scaling_2011}. 
Finally, the weight parameters and inputs of an ANN can be fine-tuned to a high degree of precision. However, despite techniques to 
modulate the effective rates of chemical reactions, the calibration is significantly less precise than in electronic circuits~\cite{erik_yu_toehold_09, haley_design_2020},
and the same is true for the initial conditions~\cite{srinivas_enzyme-free_2017}. \Alex{In this context, it is important to study 
how the dynamics of RNCRNs respond to  perturbations
in both the rate coefficients and all the initial conditions. 
In Section~\ref{sec:oscillations}
and Appendix~\ref{sec:robustness}, 
we have performed a preliminary study of this problem, 
showing favorable results; more detailed studies
will form a part of future work.
Let us also note that, in the light of the
experimental challenges, we have focused in 
this paper on the single-layer (shallow) RNCRNs.
In Appendix~\ref{app:single_layer_RNCRN}, we have 
also presented multi-layer (deep) RNCRNs;
in future work, we will study whether such deep RNCRNs have 
sufficiently advantageous approximation abilities 
to justify the additional experimental complexity.}

Despite the various experimental challenges, 
DNA-strand-displacement technologies have been 
used to build intricate chemical systems of great complexity.
In particular, under this framework, arbitrary bi-molecular CRNs, 
whose rate coefficients span up to 
six orders of magnitude~\cite{erik_yu_toehold_09},
can be implemented, and this approach has been used to build 
chemical systems with more than 100 species~\cite{cherry_scaling_2018, xiong_molecular_2022}.
In Section~\ref{sec:multi_stable}
and Appendix~\ref{sec:implementation}, 
we have implemented an RNCRN via DNA-strand-displacement 
under experimentally feasible conditions.
In this context, a natural extension is 
the addition of enzymes, which can enhance 
the capabilities of the DNA circuits.
For example, RNA polymerase has been used to 
continually produce strand displacement 
components~\cite{bae_situ_2021, schaffter_cotranscriptionally_2022}, 
while strand-displacing polymerase~\cite{thachuk_implementing_2019} and 
the trio of polymerase, endonuclease and nickase (PEN)~\cite{okumura_nonlinear_2022,fujii_predatorprey_2013} can
also be used to realize a range of chemical reactions.
Such enzyme-based approaches have the potential 
to enable more universal, easily supplied fuel 
(in the form of nucleotide triphosphates) 
to maintain the concentrations of suitable auxiliary 
components for longer intervals of time,
as well as to speed-up some of the underlying reactions.

We have used DNA strand-displacement
to demonstrate experimental implementability of RNCRNs.
More broadly, RNCRNs can be implemented
with different physical media, provided the existence
of a compilation framework from CRNs to such media~\cite{chen_programmable_2013, soloveichik_dna_2010, thachuk_implementing_2019}. Therefore, the results
presented in this paper have the potential to 
generalize to other nanotechnological and synthetic biological contexts.
In particular, given that RNCRNs contain slow and fast species, 
it is natural to consider architectures that operate over multiple time-scales. Important examples are transcription-factor networks.
These networks have been shown to leverage the innate difference 
in the speed between the slow protein production and degradation, 
and the fast protein-protein interactions~\cite{zhu_synthetic_2022};
furthermore, the transcription-factor binding sites 
display dose thresholding reminiscent of a perceptron~\cite{spivakov_spurious_2014}.
In this context, one can envisage the total concentration
of each protein as the (slow) executive species concentration, 
and the concentrations of particular conformational states of the proteins as the (fast) chemical perceptron concentrations.
Since the ODE models of transcription-factor networks
have on the right-hand side rational functions, and not polynomials, 
the underlying activation function 
would be different from the one considered in this paper.
Nevertheless, the results we present in Appendix~\ref{app:single_layer_RNCRN}
readily generalize to any non-polynomial activation function, 
allowing for construction of RNCRN-like networks
based on transcription factors.
Beyond chemistry and genetic circuits, 
RNCRNs may also find applications 
in immunology~\cite{farmer_immune_1986} 
and population dynamics~\cite{hirafuji_lotka-volterra_1999}.
In particular, let us note that the ODEs describing microbial cellular communities~\cite{stein_ecological_2013} are
reminiscent of those governing chemical 
perceptrons presented in this paper.

\section{Acknowledgements}
\label{app:acknowledge}
\noindent
{\small {\bf Author Contributions}:
AD, BQ, TEO, and TP conceptualization the study; 
AD and TP preformed the mathematical analyses; 
AD preformed the numerical simulations and produced the figures; 
AD and TP wrote the paper;
BQ and TEO reviewed the paper.}
\\
\noindent
{\small
{\bf Funding}:
Alexander Dack acknowledges funding from the Department of Bioengineering at Imperial College London. Benjamin Qureshi was funded by the European Research Council (ERC) under the European Union’s Horizon 2020 research and innovation program (Grant agreement No. 851910). Thomas E. Ouldridge would like to thank the Royal Society for a University Research Fellowship. Tomislav Plesa would like to thank Peterhouse, 
University of Cambridge, for a Fellowship.}
\\
\noindent
{\small
{\bf Declaration of interests}:
The authors declare no competing interests.
}
\\
\noindent
{\small
{\bf Acknowledgments}:
We thank Chris P. Barnes and Kathleen J. Y. Zhang for useful discussions about synthetic microbial communities.
}
\\
\noindent
{\small
{\bf Data and code availability}:
Data and code to be made available in a public repository with a DOI upon publication. During the review process you may access the following GitHub repository \cite{Dack_github_2024}: \url{https://github.com/alexdack/recurrent_neural_chemical_reaction_networks}.}

\bibliographystyle{ieeetr}  
\bibliography{references.bib}  

\appendix

\section{Appendix: Universal approximation theorem for RNCRNs}
\label{app:single_layer_RNCRN}
In this section, we present the main theoretical result
for single- and multi-layer RNCRNs.

\subsection{Single-layer RNCRN}
Consider the single-layer RNCRN given by
\begin{align}
\varnothing & \xrightarrow[]{\beta_i} X_i,  
\; \; \; \; 
X_i + Y_j \xrightarrow[]{\left|\alpha_{i,j}\right|} X_i + \textrm{sign}(\alpha_{i,j}) X_i + Y_j,\nonumber\\
\varnothing & \xrightarrow[]{\gamma/\mu} Y_j,  
\; \; \; \; 
Y_j \xrightarrow[]{\left|\theta_{j}\right|/\mu} Y_j + \textrm{sign}(\theta_{j}) Y_j, 
\; \; \; \; 
X_i + Y_j \xrightarrow[]{\left|\omega_{j,i}\right|/\mu} X_i + Y_j + \textrm{sign}(\omega_{j,i}) Y_j,
\; \; \; \;  
2 Y_j \xrightarrow[]{1/\mu} Y_j,
\label{eq:CRN_single_layer}
\end{align}
for $i = 1, 2, \ldots, N$ and $j = 1, 2, \ldots, M$,
where for any real number $x \in \mathbb{R}$, we define 
$\textrm{sign}(x) = -1$ if $x < 0$, 
$\textrm{sign}(x) = 0$ if $x = 0$ and
$\textrm{sign}(x) = 1$ if $x > 0$.
The RREs induced by~(\ref{eq:CRN_single_layer}) are 
given by~(\ref{eq:single_layer_RRE}).

In what follows, we define 
$\mathbf{x} = (x_1,x_2,\ldots,x_N) \in \mathbb{R}^N$,
and collect all the rate coefficients 
$\alpha_{i,j}$, $\omega_{j,k}$, $\theta_{j}$  
into suitable vectors 
$\boldsymbol{\alpha} \in \mathbb{R}^{NM}$ ,
$\boldsymbol{\omega} \in \mathbb{R}^{M N}$, $\boldsymbol{\theta} \in \mathbb{R}^{M}$, respectively;
similarly, we let 
$\boldsymbol{\beta} = (\beta_1,\beta_2,\ldots,\beta_N)
\in \mathbb{R}_{\ge}^{N}$.
Furthermore, we define the \emph{reduced vector field} by
\begin{align}
g_i(\mathbf{x}) = 
g_i(\mathbf{x}; \boldsymbol{\alpha}, \boldsymbol{\beta},\gamma,\boldsymbol{\omega}, \boldsymbol{\theta} ) & = \beta_i + 
x_i \sum_{j=1}^M \alpha_{i,j} 
\sigma_{\gamma} \left(\sum_{k=1}^N \omega_{j,k} x_k 
+ \theta_{j} \right),  
\; \; \; \textrm{for } i = 1, 2, \ldots, N,
\label{eq:reduced_functions}
\end{align}
which appears on the 
right-hand side of~(\ref{eq:single_layer_RRE_reduced}).  

\begin{theorem}$($\textbf{Universal approximation theorem for single-layer \emph{RNCRN}s}$)$  \label{the:single_layer_RNCRN}
Consider the target \emph{ODE} system~$(\ref{eq:target_ODEs})$
on a compact set $\mathbb{K} = \mathbb{K}_1 \times \mathbb{K}_2 
\times \ldots \times \mathbb{K}_N \subset \mathbb{R}_{>}^N$ 
in the state-space, with vector field $(f_1,f_2,\ldots,f_N)$ 
Lipschitz-continuous on $\mathbb{K}$. 
Consider also the single-layer 
\emph{RNCRN}~$(\ref{eq:CRN_single_layer})$
with \emph{RREs}~$(\ref{eq:single_layer_RRE})$ 
with rate coefficients $\boldsymbol{\beta} = 
\boldsymbol{\beta}^* \in \mathbb{R}_\geq^{N}$ 
and $\gamma = \gamma^* > 0$ fixed to any values.
\begin{enumerate}
\item[\textbf{\emph{(i)}}] \textbf{Quasi-static approximation}. 
Consider the reduced vector field~$(\ref{eq:reduced_functions})$.
Let $\varepsilon > 0$ be any given tolerance. 
Then, for every sufficiently large $M > 0$ there exist 
$\boldsymbol{\alpha}^* \in \mathbb{R}^{NM}$ , $\boldsymbol{\omega}^* \in \mathbb{R}^{M N}$, and $\boldsymbol{\theta}^* \in \mathbb{R}^{M}$  such that
\begin{align}
\underset{\mathbf{x} \in \mathbb{K}}{\max}
\left| g_i(\mathbf{x}; \boldsymbol{\alpha}^*,
    \boldsymbol{\beta}^*,\gamma^*,\boldsymbol{\omega}^*,\boldsymbol{\theta}^*)
    - f_i(\mathbf{x})\right| \le \varepsilon
\; \; \textrm{for all } i = 1,2,\ldots,N.
\label{eq:quasi_static}
\end{align}
\item[\textbf{\emph{(ii)}}] \textbf{Dynamical approximation}.
Assume that the solution of~$(\ref{eq:target_ODEs})$ 
exists for all $t \in [0,T]$ for some $T > 0$, 
and that $\bar{x}_i(t) \in \mathbb{K}_i$ for all $t \in [0,T]$
for all $i = 1,2,\ldots,N$.
Then, for every sufficiently small $\varepsilon = \varepsilon^* > 0$ fixed
there exists $\mu_0 > 0$ such that for all $\mu \in (0,\mu_0)$
system~$(\ref{eq:single_layer_RRE})$ 
has a unique solution $x_i(t) \in \mathbb{K}_i$ for all $t \in [0,T]$
for all $i = 1,2,\ldots,N$, and 
\begin{align}
\underset{t \in [0,T]}{\max}
\left|x_i(t;\boldsymbol{\alpha}^*,\boldsymbol{\beta}^*,\gamma^*,\boldsymbol{\omega}^*,\boldsymbol{\theta}^*,\mu) 
    - \bar{x}_i(t) \right| & \le c_1 \varepsilon^* + c_2 \mu 
\; \; \textrm{for all } i = 1,2,\ldots,N,
\label{eq:dynamical}
\end{align}
where constants $c_1$ and $c_2$ are independent of $\mu$.
\end{enumerate}
\end{theorem}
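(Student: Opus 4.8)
The plan is to prove (i) as an instance of the universal approximation theorem for single-hidden-layer networks, and (ii) by combining Gronwall's inequality (to compare the reduced system with the target) with Tikhonov's singular-perturbation theorem (to compare the full system with the reduced one), and then invoking the triangle inequality.

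For part (i), observe that on the compact set $\mathbb{K}\subset\mathbb{R}_{>}^N$ each coordinate is bounded away from $0$, so $h_i(\mathbf{x}):=(f_i(\mathbf{x})-\beta_i^*)/x_i$ is well defined and Lipschitz on $\mathbb{K}$. Since $g_i(\mathbf{x};\boldsymbol{\alpha}^*,\boldsymbol{\beta}^*,\gamma^*,\boldsymbol{\omega}^*)-f_i(\mathbf{x}) = x_i\big(\sum_j\alpha_{i,j}^*\sigma_{\gamma^*}(\sum_k\omega_{j,k}^*x_k+\omega_{j,0}^*)-h_i(\mathbf{x})\big)$ and $x_i$ is bounded above on $\mathbb{K}$, estimate $(\ref{eq:quasi_static})$ reduces to approximating each $h_i$ uniformly on $\mathbb{K}$ by a linear combination of ridge functions $\mathbf{x}\mapsto\sigma_{\gamma^*}(\langle\omega,\mathbf{x}\rangle+\omega_0)$. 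Because $\sigma_{\gamma^*}(z)=\tfrac12(z+\sqrt{z^2+4\gamma^*})$ is continuous and, for $\gamma^*>0$, smooth and non-polynomial, such combinations are dense in $C(\mathbb{K})$ by the classical feedforward-network approximation theorems \cite{cybenko_approximation_1989,hornik_multilayer_1989} and their refinement to general non-polynomial activations. Concatenating the hidden units needed for the $N$ separate outputs gives a working $M_0$, and padding with zero output weights shows the bound persists for every $M\ge M_0$, which is the claimed ``sufficiently large $M$''.

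For part (ii), fix a small $\varepsilon^*$ and the corresponding $\boldsymbol{\alpha}^*,\boldsymbol{\omega}^*$ from (i), and let $\tilde{\mathbf{x}}(t)$ solve the reduced system $(\ref{eq:single_layer_RRE_reduced})$ with $\tilde{x}_i(0)=a_i$. Since $g_i$ is smooth, hence locally Lipschitz, and $\|g_i-f_i\|_{C(\mathbb{K})}\le\varepsilon^*$ while $\bar{\mathbf{x}}(t)\in\mathbb{K}$, a continuation argument with Gronwall's inequality shows that for $\varepsilon^*$ small $\tilde{\mathbf{x}}(t)$ stays in a fixed compact neighbourhood of $\mathbb{K}$ inside $\mathbb{R}_{>}^N$ and $\max_{[0,T]}|\tilde{x}_i(t)-\bar{x}_i(t)|\le c_1\varepsilon^*$, with $c_1$ depending only on $T$ and local Lipschitz data. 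Next I would recognise $(\ref{eq:single_layer_RRE})$ as a fast--slow system in Tikhonov normal form: with $\mathbf{x}$ frozen, each fast equation $\tfrac{dy_j}{d\tau}=\gamma^*+y_jL_j(\mathbf{x})-y_j^2$, $L_j(\mathbf{x})=\sum_i\omega_{j,i}^*x_i+\omega_{j,0}^*$, factors as $-(y_j-y_j^+)(y_j-y_j^-)$ with $y_j^+=\sigma_{\gamma^*}(L_j(\mathbf{x}))>0>y_j^-$, so $y_j^+$ is the unique equilibrium in $\{y_j\ge0\}$, its basin of attraction is $(y_j^-,\infty)\supset[0,\infty)$, and the linearisation there equals $-\sqrt{L_j(\mathbf{x})^2+4\gamma^*}\le-2\sqrt{\gamma^*}<0$; hence the critical manifold is normally hyperbolic and exponentially attracting, uniformly on compact $\mathbf{x}$-sets, and the slow dynamics on it is exactly $(\ref{eq:single_layer_RRE_reduced})$. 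Tikhonov's theorem then yields $\mu_0>0$ such that for $\mu\in(0,\mu_0)$ the full solution exists on $[0,T]$ with $x_i(0)=a_i$ and arbitrary $y_j(0)=b_j\ge0$, with $\max_{[0,T]}|x_i(t)-\tilde{x}_i(t)|\le c_2\mu$, $c_2$ independent of $\mu$ (there is no boundary layer in $x$, since the slow equations carry no $O(1/\mu)$ terms and $x_i(0)=\tilde{x}_i(0)$). A further continuation argument confines $x_i(t)$ to $\mathbb{K}_i$, uniqueness follows from local Lipschitz continuity of the full right-hand side, and the triangle inequality assembles $|x_i(t)-\bar{x}_i(t)|\le c_1\varepsilon^*+c_2\mu$, i.e. $(\ref{eq:dynamical})$.

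The hard part will be the singular-perturbation step: one must verify the Tikhonov hypotheses \emph{uniformly} in $\mathbf{x}$ over the relevant compact set --- the uniform exponential attraction rate $2\sqrt{\gamma^*}$ and, crucially, that the basin of attraction of the quasi-static branch $y_j=\sigma_{\gamma^*}(L_j(\mathbf{x}))$ contains \emph{every} nonnegative $b_j$, which is exactly what makes the approximation independent of the auxiliary initial conditions. The remaining technical care is in the two bootstrap/continuation arguments that keep $\tilde{\mathbf{x}}(t)$ and $\mathbf{x}(t)$ inside compact subsets of $\mathbb{R}_{>}^N$ on $[0,T]$; by contrast, part (i) is essentially a citation of the universal approximation theorem once non-polynomiality of $\sigma_{\gamma^*}$ is noted.
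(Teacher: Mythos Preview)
Your proposal is correct and follows essentially the same route as the paper: reduce (i) to a universal approximation theorem for the continuous non-polynomial activation $\sigma_{\gamma^*}$ applied to $h_i=(f_i-\beta_i^*)/x_i$, and split (ii) via the triangle inequality into a regular-perturbation/Gronwall comparison of the reduced system with the target plus a Tikhonov singular-perturbation comparison of the full system with the reduced one. The paper's write-up is terser---it just cites regular perturbation theory and Tikhonov after noting the decoupled fast equilibria are globally stable in $\{y_j\ge0\}$---whereas you spell out the quadratic factorisation, the uniform spectral gap $2\sqrt{\gamma^*}$, and the continuation bookkeeping, but the underlying argument is the same.
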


\begin{proof} \item[\textbf{(i)}] \textbf{Quasi-static approximation}. 
Consider continuous function $h_i : \mathbb{K} \rightarrow \mathbb{R}$
defined by $h_i(\mathbf{x}) = (f_i(\mathbf{x}) - \beta_i^*)/x_i$.
Since the activation function $\sigma_{\gamma}$, 
defined by~(\ref{eq:smu_activation}), 
is continuous and non-polynomial, it follows 
from~\cite{pinkus_approximation_1999}[Theorem 3.1] that 
for any $\varepsilon > 0$ there exist $M > 0$, 
coefficients $\alpha_{i,j}^*,\omega_{j,k}^*, \theta_{j}^* \in \mathbb{R}$ 
and a continuous function $\rho_i(\mathbf{x})$ such that 
\begin{align}
h_i(\mathbf{x}) & = \sum_{j = 1}^{M} \alpha_{i,j}^* \sigma_{\gamma^*} 
\left(\sum_{k=1}^N \omega_{j,k}^* x_k + \theta_{j}^* \right)
+ \rho_i(\mathbf{x})
\; \; \; \textrm{for all } i = 1,2,\ldots,N,
\label{eq:quasi_static_aux}
\end{align}
and $\max_{\mathbf{x} \in \mathbb{K}} |\rho_i(\mathbf{x})|
\le \varepsilon/X_i$, with $X_i = \max_{x_i \in \mathbb{K}_i} x_i$.
Equation~(\ref{eq:quasi_static_aux}) implies~(\ref{eq:quasi_static}). 

\item[\textbf{(ii)}] \textbf{Dynamical approximation}. 
It follows from~(\ref{eq:quasi_static})
and regular perturbation theory
\cite{coddington_theory_1995}
that there exists $\varepsilon_0 > 0$ such that
for all $\varepsilon \in (0,\varepsilon_0)$
\begin{align}
\underset{t \in [0,T]}{\max}
\left|\tilde{x}_i(t;\boldsymbol{\alpha}^*,
    \boldsymbol{\beta}^*,\gamma^*,\boldsymbol{\omega}^*, \boldsymbol{\theta}^*) 
    - \bar{x}_i(t) \right| & \le c_1 \varepsilon 
\; \; \textrm{for all } i = 1,2,\ldots,N,
\label{eq:regular}
\end{align}
for some $\varepsilon$-independent constant $c_1 > 0$,
where $\tilde{x_i}(t) = \tilde{x}_i(t;\boldsymbol{\alpha}^*,
\boldsymbol{\beta}^*,\gamma^*,\boldsymbol{\omega}^*,\boldsymbol{\theta}^*)$
satisfies the reduced system~(\ref{eq:single_layer_RRE_reduced}).
In what follows, we fix $\varepsilon 
= \varepsilon^* \in (0,\varepsilon_0)$. 

Consider the fast (adjoined) system 
from~(\ref{eq:single_layer_RRE}), defined by
\begin{align}
\frac{\mathrm{d} y_j}{\mathrm{d} \tau} & =  
\gamma^* + \theta_{j}^* y_j
+ \left(\sum_{i=1}^N \omega_{j,i}^* x_i \right) y_j
-  y_j^2, 
\hspace{0.5cm} y_j(0) = b_j, 
\; \; \; \textrm{for } j = 1, 2, \ldots, M,
\label{eq:single_layer_RRE_fast}
\end{align}
where $x_1,x_2,\ldots,x_N$ are parameters.
The ODEs from~(\ref{eq:single_layer_RRE_fast})
are decoupled, and each one has a unique non-negative 
continuously differentiable equilibrium
$y_j = \sigma_{\gamma^*} \left(\sum_{k=1}^N \omega_{j,k}^* 
x_k + \theta_{j}^* \right)$, which is stable
for all non-negative initial conditions $b_j \ge 0$. 
It follows from singular perturbation theory 
(Tikhonov's theorem)~\cite{klonowski_simplifying_1983, Tik52} that there exists $\mu_0 > 0$ such that 
for all $\mu \in (0,\mu_0)$ system~(\ref{eq:single_layer_RRE})
has a unique solution in $\mathbb{K}$ 
over time-interval $[0,T]$, and
\begin{align}
\underset{t \in [0,T]}{\max}
\left|x_i(t; \boldsymbol{\alpha}^*,
    \boldsymbol{\beta}^*,\gamma^*,\boldsymbol{\omega}^*, \boldsymbol{\theta}^*, \mu) 
    - \tilde{x}_i(t; \boldsymbol{\alpha}^*,
    \boldsymbol{\beta}^*,\gamma^*,\boldsymbol{\omega}^*, \boldsymbol{\theta}^*) \right| & \le c_2 \mu 
\; \; \textrm{for all } i = 1,2,\ldots,N,
\label{eq:singular}
\end{align}
for some $\mu$-independent constant $c_2 > 0$. 
Using $|x_i - \bar{x}_i| \le |x_i - \tilde{x}_i|
+ |\tilde{x}_i - \bar{x}_i|$ and~(\ref{eq:regular}) and~(\ref{eq:singular}),
one obtains~(\ref{eq:dynamical}).
\end{proof}

\noindent
\textbf{Remark}. In Theorem~\ref{the:single_layer_RNCRN},
$L_{\infty}$-norm is used to measure the distance between 
the target and reduced vector fields. On the other hand, 
in Algorithm~\ref{algh:RNCRN}, we instead use 
the $L_{2}$-norm, for computational simplicity.
Rigorous justification of using the $L_{2}$-norm is beyond the
scope of this paper.

\subsection{Multi-layer RNCRN}
\label{app:proof_multi_net_multi_dim}
In the ANN context, \emph{deep} neural networks, 
i.e. those containing multiple layer of neurons, 
are important~\cite{lecun_deep_2015}. 
In the CRN context, generalizing the single-layer RREs~(\ref{eq:single_layer_RRE}) to multi-layer ones yields 
\begin{align}
\frac{\mathrm{d} x_i}{\mathrm{d}t} & = \beta_i + x_i\sum_{k=1}^{M_L} \alpha_{i,k} y_{k}^{(L)}, && x_i(0) = a_i, 
\hspace{1.3cm} \textrm{for } i = 1, 2, \ldots, N, \nonumber \\
\frac{\mathrm{d} y_{j_1}^{(1)}}{\mathrm{d}t} & =
\frac{\gamma}{\mu} 
+ \frac{\theta_{j}^{(1)}}{\mu} y_{j_1}^{(1)}
+ \left(\sum_{i=1}^N \frac{\omega_{j,i}^{(1)}}{\mu} x_i\right) y_{j_1}^{(1)} - \frac{1}{\mu} \left(y_{j_1}^{(1)}\right)^2, &&  y_{j_1}^{(1)}(0) = b_{j_1} \geq 0,  \; \; \; \textrm{for } j_1 = 1, 2, \ldots, M_1, \nonumber \\
\frac{\mathrm{d} y_{j_2}^{(2)}}{\mathrm{d}t} & =
\frac{\gamma}{\mu} 
+ \frac{\theta_{j}^{(2)}}{\mu} y_{j_2}^{(2)}
+ \left(\sum_{k=1}^{M_1} \frac{\omega_{j,k}^{(2)}}{\mu} y_k^{(1)}\right) y_{j_2}^{(2)}
- \frac{1}{\mu} \left(y_{j_2}^{(2)}\right)^2, &&  y_{j_2}^{(2)}(0) = b_{j_2} \geq 0,  \; \; \; \textrm{for } j_2 = 1, 2, \ldots, M_2, \nonumber \\
    \vdots \nonumber \\
\frac{\mathrm{d} y_{j_L}^{(L)}}{\mathrm{d}t} &= \frac{\gamma}{\mu} 
+ \frac{\theta_{j}^{(L)}}{\mu} y_{j_L}^{(L)}
+ \left(\sum_{k=1}^{M_{L-1}} \frac{\omega_{j,k}^{(L)}}{\mu} y_k^{(L-1)}\right)  y_{j_L}^{(L)} 
- \frac{1}{\mu} \left(y_{j_L}^{(L)}\right)^2, && y_{j_L}^{(L)}(0) = b_{j_L} \geq 0,  \; \; \; \textrm{for } j_L = 1, 2, \ldots, M_L.
    \label{eq:multi_layer_RNCRN_RRE}
\end{align}
In particular, (\ref{eq:multi_layer_RNCRN_RRE})
involves $N$ executive species $X_1,X_2,\ldots,X_N$, 
and $L$ layers of chemical perceptrons, with the 
$i$th layer containing $M_i$ perceptrons.
In this hierarchy, the executive species 
$X_1,X_2,\ldots,X_N$ are inputs to the first layer
of neurons $Y_1^{(1)}, \dots, Y_{M_1}^{(1)}$, 
which themselves are inputs to the 
second layer of neurons $Y_1^{(2)}, \dots, Y_{M_2}^{(2)}$,
etc.; the final, $L$th, layer of neurons 
$Y_1^{(L)}, \dots, Y_{M_L}^{(L)}$ 
feeds back into the executive species.
The multi-layer RNCRN induced by~(\ref{eq:multi_layer_RNCRN_RRE}) is given by 
\begin{align}
\varnothing & \xrightarrow[]{\beta_i} X_i,  
\; \; \; \; 
X_i + Y_{j_L}^{(L)} \xrightarrow[]{\left|\alpha_{i,{j_L}}\right|} X_i + \textrm{sign}(\alpha_{i,{j_L}}) X_i + Y_{j_L}^{(L)},\nonumber\\
\varnothing & \xrightarrow[]{\gamma/\mu} Y_{j_1}^{(1)},  
\; \; \; \; 
Y_{j_1}^{(1)} \xrightarrow[]{\left|\theta_{j_1}^{(1)}\right|/\mu} Y_{j_1}^{(1)} + \textrm{sign}(\theta_{j_1}^{(1)}) Y_{j_1}^{(1)}
\nonumber\\
X_i + Y_{j_1}^{(1)} & \xrightarrow[]{\left|\omega_{j_1,i}^{(1)}\right|/\mu} X_i + Y_{j_1}^{(1)} + \textrm{sign}(\omega_{j_1,i}^{(1)}) Y_{j_1}^{(1)},
\; \; \; \;  
2 Y_{j_1}^{(1)} \xrightarrow[]{1/\mu} Y_{j_1}^{(1)}, \nonumber \\
\varnothing & \xrightarrow[]{\gamma/\mu} Y_{j_2}^{(2)},  
\; \; \; \; 
Y_{j_2}^{(2)} \xrightarrow[]{\left|\theta_{j_2}^{(2)}\right|/\mu} Y_{j_2}^{(2)} + \textrm{sign}(\theta_{j_2}^{(2)}) Y_{j_2}^{(2)}, 
\nonumber\\
Y_{j_1}^{(1)} + Y_{j_2}^{(2)} & \xrightarrow[]{\left|\omega_{j_2,i}^{(2)}\right|/\mu} Y_{j_1}^{(1)} + Y_{j_2}^{(2)} + \textrm{sign}(\omega_{j_2,i}^{(2)}) Y_{j_2}^{(2)},
\; \; \; \;  
2 Y_{j_2}^{(2)} \xrightarrow[]{1/\mu} Y_{j_2}^{(2)}, \nonumber \\
\vdots \nonumber \\
\varnothing & \xrightarrow[]{\gamma/\mu} Y_{j_L}^{(L)},  
\; \; \; \; 
Y_{j_L}^{(L)} \xrightarrow[]{\left|\theta_{j_L}^{(L)}\right|/\mu} Y_{j_L}^{(L)} + \textrm{sign}(\theta_{j_L}^{(L)}) Y_{j_L}^{(L)}, 
\nonumber\\
Y_{j_{L-1}}^{(L-1)} + Y_{j_L}^{(L)} & \xrightarrow[]{\left|\omega_{j_L,i}^{(L)}\right|/\mu} Y_{j_{L-1}}^{(L-1)} + Y_{j_L}^{(L)} + \textrm{sign}(\omega_{j_L,i}^{(L)}) Y_{j_L}^{(L)},
\; \; \; \;  
2 Y_{j_L}^{(L)} \xrightarrow[]{1/\mu} Y_{j_L}^{(L)}, 
\label{eq:CRN_multi_layer}
\end{align}
for $i = 1, \hdots, N$ and $j_l = 1, \hdots, M_l$ .

\textbf{Reduced ODEs}. Multiplying the RREs of the chemical perceptrons from~(\ref{eq:multi_layer_RNCRN_RRE}) by $\mu$, and then fixing $\mu=0$, it follows that their concentrations are given in the quasi-state state by
\begin{align}
(y_{j_1}^{(1)})^* &= \frac{1}{2} \left(  \left( \sum_{i=1}^N\omega_{j_1,i}^{(1)}x_i + \theta_{j_1}^{(1)} \right) + \sqrt{\left( \sum_{i=1}^N\omega_{j_1,i}^{(1)}x_i + \theta_{j_1}^{(1)} \right)^2  + 4\gamma}\right), \nonumber\\     
\vdots \nonumber \\
(y_{j_L}^{(L)})^* &= \frac{1}{2}\left(\left( \sum_{k=1}^{M_{L-1}}\omega_{j_L,k}^{(L)} (y_k^{(L-1)})^* + \theta_{j_L}^{(L)} \right) + \sqrt{\left( \sum_{k=1}^{M_{L-1}}\omega_{j_L,k}^{(L)} 
(y_k^{(L-1)})^* + \theta_{j_L}^{(L)} \right)^2  + 4\gamma}\right),   
\label{eq:multi_layer_RRE_quasistatic}
\end{align}
for $j_l = 1, 2, \ldots, M_l$.
In this case, the executive species satisfy the reduced ODEs
\begin{align}
\frac{\mathrm{d} x_i}{\mathrm{d}t} &= g_i(x_1,\ldots,x_N)  = \beta_i + x_i\sum_{k=1}^{M_L} \alpha_{i,k} (y_{k}^{(L)})^*, \;\; x_i(0) = a_i \geq 0, \; \; \; \textrm{for } i = 1, 2, \ldots, N.  
    \label{eq:multi_layer_RRE_reduced}
\end{align}
One can formulate analogous theorem to Theorem~\ref{the:single_layer_RNCRN} for the multi-layer RNCRN,
by using~\cite{anderson_reaction_2021}[Definition 4.7 and Proposition 4.9], which ensures that the equilibrium concentration of the 
fast (adjoined) system is globally stable.

\section{Appendix: Examples}
\label{app:code_and_data}
In this section, we present some details
underlying the examples from Section~\ref{sec:examples}.

\subsection{Multi-stable target system}
\label{app:app_tri_stable}

\textbf{Tri-stability}. 
We use Algorithm~\ref{algh:RNCRN} to approximate the target system~(\ref{eq:multistable_ex}) on $\mathbb{K}_1 = [1,18]$. Tolerance
 $\varepsilon \approx 0.5 \times 10^{-3}$ is met with an RNCRN
 with $M = 4$ perceptrons, coefficients 
 $\beta_1 = 0$, $\gamma=1$, and
 \begin{align}
    \boldsymbol{\alpha}_1 =  \begin{pmatrix}
        -4.247 \\ 0.487 \\ 1.363 \\ 0.185
    \end{pmatrix}, \;\; \boldsymbol{\theta} = \begin{pmatrix}
     -4.968\\  -3.590 \\ -11.236\\ 10.249 
    \end{pmatrix},\;\; \boldsymbol{\omega}_{1} = \begin{pmatrix}
     0.511 \\ 1.043 \\ 1.078 \\ -2.355
    \end{pmatrix},
    \label{eq:tri_stable_coeff}
\end{align}
where $\boldsymbol{\alpha}_1 = (\alpha_{1,1},\alpha_{1,2},\alpha_{1,3},
\alpha_{1,4})^{\top}$, $\boldsymbol{\theta} = 
(\theta_{1},\theta_{2},\theta_{3},\theta_{4})^{\top}$, and $\boldsymbol{\omega}_1 = 
(\omega_{1,1},\omega_{2,1},\omega_{3,1},\omega_{4,1})^{\top}$.
The reduced ODE is given by 
\begin{align}
\frac{\mathrm{d} \tilde{x}_1}{\mathrm{d} t} & = 
g_1(\tilde{x}_1) = \tilde{x}_1 \sum_{j=1}^{4}\alpha_{1,j} 
\sigma_{1} \left(\omega_{j,1} \tilde{x}_1 + \theta_{j} \right),
\label{eq:single_layer_RRE_reduced_sin_tri_stable}
\end{align}
 while the full ODEs read
 \begin{align}
\frac{\mathrm{d} x_1}{\mathrm{d}t} &= x_1\left(\sum_{j=1}^4 \alpha_{1,j}y_j\right), && x_1(0)= a_1 \in  \mathbb{K}_1,\nonumber \\
\frac{\mathrm{d} y_j}{\mathrm{d}t} &=  \frac{1}{\mu} + 
\frac{\theta_{j}}{\mu} y_j + \frac{\omega_{j,1}}{\mu} x_1 y_j - 
\frac{1}{\mu} y_j^2, && y_j(0)= b_j \geq 0, 
\; \; \; j = 1,2,\ldots,4.
\label{eq:single_layer_RRE_full_sin_tri_stable}
\end{align}

\textbf{Multi-stability}.
Analogous to Figure~\ref{fig:bi_stable_sin_dynamics_and_traj},
we present some results for uni-, bi-, tri-, quad- and penta-stable 
RNCRNs in Figure~\ref{fig:multi_stable_scale}(a)--(e). 
Furthermore, we present in Figure~\ref{fig:multi_stable_scale}(f)
the number of chemical perceptrons in the RNCRNs as a function
of the number of the underlying stable equilibria, showing a linear
trend; in particular, this plot suggests that $n + 1$ chemical perceptrons may be sufficient to achieve $n$ stable equilibria.
However, we note that the RNCRNs underlying Figure~\ref{fig:multi_stable_scale} are not necessarily 
the smallest ones achieved the indicated multi-stability.

\begin{figure}[ht]
    \centering
    \includegraphics[width=\columnwidth]{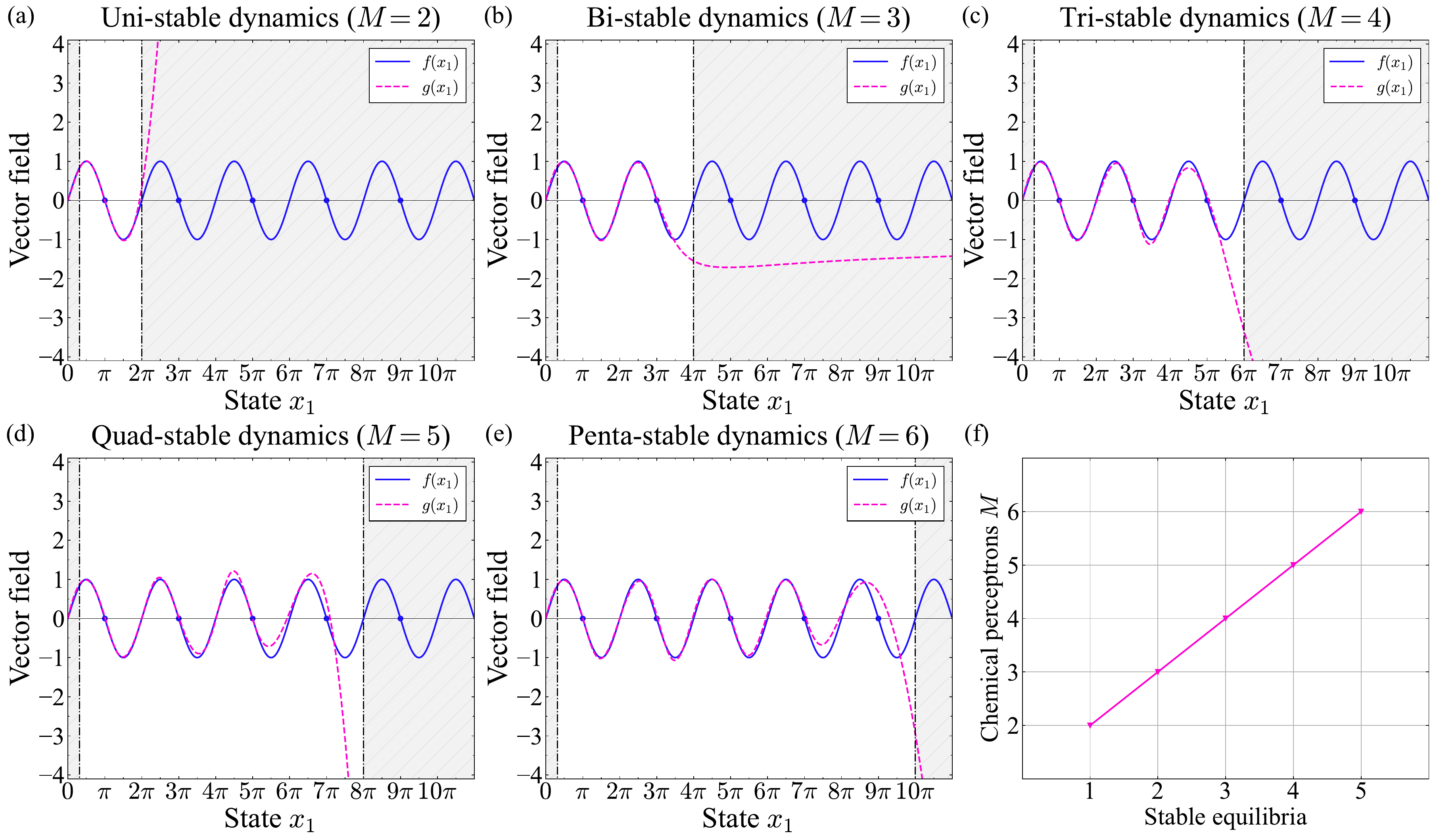}
    \caption{RNCRN approximations of the multi-stable target system~(\ref{eq:multistable_ex}). (a)-(e) The vector field of the target system~(\ref{eq:multistable_ex}) and reduced systems for some 
    RNCRNs trained over the following state-space sets: 
    $\mathbb{K}_1 = [1,2\pi]$ for uni-stability, 
    $\mathbb{K}_1 = [1,4\pi]$ for bi-stability, 
    $\mathbb{K}_1 = [1,6\pi]$ for tri-stability,
    $\mathbb{K}_1 = [1,8\pi]$ for quad-stability,  
    and $\mathbb{K}_1 = [1,10\pi]$ for penta-stability. 
    (f) Number of chemical perceptrons in the RNCRNs 
    as a function of the number of the underlying stable equilibria. 
    The RNCRNs are available in the code repository \cite{Dack_github_2024}.}
    \label{fig:multi_stable_scale}
\end{figure}

\subsection{Oscillatory target system}
\label{sec:app_bessel_osc}
We use Algorithm~\ref{algh:RNCRN} to approximate the target system~(\ref{eq:bessel_osc}) on $\mathbb{K}_1
\times \mathbb{K}_2 = [1,8] \times [2, 10]$. Tolerance $\varepsilon \approx 10^{-4}$ is met with an RNCRN with $M=6$ perceptrons and coefficients $\beta_1=\beta_2 = 1$, $\gamma = 1$, and  
\begin{align}
    \boldsymbol{\alpha}_1 &= \begin{pmatrix}
    0.664 \\
    3.392 \\
    -1.159 \\
    10.658 \\
    -2.799 \\
    -0.571
    \end{pmatrix}, \; \; 
    \boldsymbol{\alpha}_2 = \begin{pmatrix} 
    4.260 \\
    0.273 \\
    -0.034 \\
    0.134 \\
    -6.479 \\
    -0.028
    \end{pmatrix}, \;\;
    \boldsymbol{\theta} &= \begin{pmatrix}
        -1.833\\
        -2.452\\
        8.979\\	
        2.420\\
        0.344\\
        -6.624
    \end{pmatrix}, \;\;
    \boldsymbol{\omega}_1 = \begin{pmatrix}
0.290 \\	0.466\\	-2.195\\	-1.584\\	-0.2900\\	2.182
    \end{pmatrix}, \;\;
    \boldsymbol{\omega}_2 = \begin{pmatrix}
-0.572\\	0.112\\	-0.015\\	-0.226\\	-0.716\\	0.788
    \end{pmatrix},
    \label{eq:coefficents_oscill}
\end{align}
where $\boldsymbol{\alpha}_i = (\alpha_{i,1},\alpha_{i,2},\ldots, \alpha_{i,6})^{\top}$ for $i = 1,2$, 
$\boldsymbol{\theta} = 
(\theta_{1},\theta_{2},\theta_{3},\theta_{4},\theta_{5}, \theta_{6})^{\top}$, and  $\boldsymbol{\omega}_k = 
(\omega_{1,k},\omega_{2,k},\ldots,\omega_{6,k})^{\top}$
for $k = 1, 2$.
The reduced ODEs are given by 
\begin{align}
\frac{\mathrm{d} \tilde{x}_1}{\mathrm{d} t} & = 
g_1(\tilde{x}_1, \tilde{x}_2) = 1 + \tilde{x}_1 \sum_{j=1}^{6}\alpha_{1,j} 
\sigma_{1} \left(\omega_{j,1} \tilde{x}_1 + \omega_{j,2} \tilde{x}_2 + \theta_{j} \right), \nonumber \\
\frac{\mathrm{d} \tilde{x}_2}{\mathrm{d} t} & = 
g_2(\tilde{x}_1, \tilde{x}_2) = 1 + \tilde{x}_2 \sum_{j=1}^{6}\alpha_{2,j} 
\sigma_{1} \left(\omega_{j,1} \tilde{x}_1 + \omega_{j,2} \tilde{x}_2 + \theta_{j} \right), 
\label{eq:single_layer_RRE_reduced_oscill}
\end{align}
 while the full ODEs read
\begin{align}
\frac{\mathrm{d} x_1}{\mathrm{d}t} 
& = 1+ x_1\left(\sum_{j=1}^6 \alpha_{1,j}y_j\right), 
&& x_1(0)= a_1 \in \mathbb{K}_1, \nonumber \\
\frac{\mathrm{d} x_2}{\mathrm{d}t} 
& = 1 + x_2\left(\sum_{j=1}^6 \alpha_{2,j}y_j\right), 
&& x_2(0)= a_2 \in \mathbb{K}_2, \nonumber \\
\frac{\mathrm{d} y_j}{\mathrm{d}t} &= \frac{1}{\mu} 
+ \frac{\theta_{j}}{\mu} y_j
+ \left(\sum_{i=1}^2 \frac{\omega_{j,i}}{\mu} x_i \right) y_j 
- \frac{1}{\mu} y_j^2, 
&& y_j(0)= b_j \geq 0, 
\; \; \; j = 1,2,\ldots,6.
\label{eq:full_rncrn_oscil}
\end{align}
The coefficients were rounded 
to 3 decimal places before being used in simulations.

\subsection{Chaotic target system}
\label{sec:hasting_powell_app}
We use Algorithm~\ref{algh:RNCRN} to approximate the target system~(\ref{eq:hasting_powell_ode}) on $x_1 \in \mathbb{K}_1$, 
$x_2 \in \mathbb{K}_2$ and $x_2 \in \mathbb{K}_3$
with $\mathbb{K}_1 = \mathbb{K}_2 = \mathbb{K}_3 = [0.01,1]$. Tolerance $\varepsilon \approx 10^2$ is met with an RNCRN with $M=5$ perceptrons and coefficients $\beta_1=\beta_2 = \beta_3 = 0$, $\gamma = 1$, and  
\begin{align}
    \boldsymbol{\alpha}_1 &= \begin{pmatrix}
         -0.272 \\ 2.996 \\  0.862 \\ -0.244 \\1.276 \\
    \end{pmatrix}, \; \;
     \boldsymbol{\alpha}_2 = \begin{pmatrix}
      0.109 \\ 24.039 \\ -5.668  \\ -0.057 \\ -9.584 \\
    \end{pmatrix}, \; \;    
    \boldsymbol{\alpha}_3 = \begin{pmatrix}
     0.026 \\ -0.529 \\  -1.101 \\ 0.034 \\ 1.065
    \end{pmatrix}, \; \; \nonumber \\   
    \boldsymbol{\theta} &= \begin{pmatrix}
    0.284 \\
    -1.589 \\
    -0.178 \\
    1.212 \\
     -0.707
    \end{pmatrix}, \;\;
    \boldsymbol{\omega}_{1} = \begin{pmatrix}
    -5.049 \\
     0.148\\
     0.506 \\
     15.973 \\
     -1.151 \\
    \end{pmatrix}, \;\;
    \boldsymbol{\omega}_{2} = \begin{pmatrix}
    8.895  \\
     -2.951  \\
     -4.504  \\
      -7.781 \\
      -0.606  \\
    \end{pmatrix}, \;\;
    \boldsymbol{\omega}_{3} = \begin{pmatrix}
    -0.068 \\
     -0.525  \\
     0.329   \\
     -0.027 \\
      0.199 \\
    \end{pmatrix},
    \label{eq:coeff_chaos}
\end{align}
where $\boldsymbol{\alpha}_i = (\alpha_{i,1},\alpha_{i,2},\ldots, \alpha_{i,5})^{\top}$ for $i = 1,2,3$, $\boldsymbol{\theta} = 
(\theta_{1},\theta_{2},\theta_{3},\theta_{4},\theta_{5})^{\top}$, and   
$\boldsymbol{\omega}_k = 
(\omega_{1,k},\omega_{2,k},\ldots,\omega_{5,k})^{\top}$
for $k = 1, 2, 3$.
The reduced ODEs are given by 
\begin{align}
\frac{\mathrm{d} \tilde{x}_i}{\mathrm{d} t} & = 
g_i(\tilde{x}_1, \tilde{x}_2, \tilde{x}_3) = \tilde{x}_i \sum_{j=1}^{5}\alpha_{i,j} 
\sigma_{1} \left(\omega_{j,1} \tilde{x}_1 + \omega_{j,2} \tilde{x}_2 + \omega_{j,3} \tilde{x}_3 + \theta_{j} \right),
\; \; \; i = 1,2,3,
\label{eq:single_layer_RRE_reduced_chaos}
\end{align}
 while the full ODEs read
\begin{align}
\frac{\mathrm{d} x_i}{\mathrm{d}t} &= x_i\left(\sum_{j=1}^5 \alpha_{i,j} y_j\right), && x_i(0)= a_i \in \mathbb{K}_i, 
\; \; i = 1,2,3, \nonumber \\
\frac{\mathrm{d} y_j}{\mathrm{d}t} &= \frac{1}{\mu} 
+ \frac{\theta_{j}}{\mu} y_j + 
\left( \sum_{i=1}^3 \frac{\omega_{j,i}}{\mu} x_i \right) y_j 
- \frac{1}{\mu} y_j^2, 
&& y_j(0)= b_j \geq 0, 
\; \; \; j = 1,2,\ldots,5.
 \label{eq:full_rncrn_chaos}
\end{align}
The coefficients were rounded to 3 decimal 
places before being used in simulations.

\section{Appendix: DNA-strand-displacement implementation}
\label{sec:implementation} 
Let us consider the
RREs~(\ref{eq:single_layer_RRE_full_sin_bi_stable}), and assume now that 
the concentrations $x_1, y_1, y_2, y_3$ are measured
in units of moles (M), while time $t$ in units of seconds (s).
In this section, we compile the induced RNCRN
into a DNA-strand-displacement system 
with physically admissible rate coefficients 
(i.e. rate coefficients between 
$1 M^{-1} s^{-1}$ to $10^6 M^{-1} s^{-1}$)
and admissible concentrations (i.e. concentrations less 
than $10^{-5} M$)~\cite{soloveichik_dna_2010};
to ensure these constraints, we fix the percepton speed
at a larger value, $\mu = 0.5$ in~(\ref{eq:single_layer_RRE_full_sin_bi_stable}). 
Such DNA systems rely on suitable ``fuel'' species, 
which are depleted over time, being present 
at sufficiently high concentrations. 
To simplify analysis and simulations, 
one approach is to assume that the concentrations 
of the fuel species are held constant.
This \emph{constant-fuel assumption}~\cite{soloveichik_dna_2010} can be maintained \textit{in situ} by additional experimental techniques that produce components\cite{bae_situ_2021, schaffter_cotranscriptionally_2022}, increased buffering \cite{scalise_dna_2018}, or manually replenishing the fuel species at regular time-intervals.
In the remainder of this section, we follow three steps:
firstly, we introduce a suitable rescaled RNCRN;
secondly, we map this network into a DNA-based RNCRN,
and choose the rescaling such that its rate coefficients
and species concentrations are admissible;
finally, we simplify the latter network by making
the constant-fuel assumption.

\textbf{Rescaled RNCRN}. Let us first rescale 
time and concentrations from RREs~(\ref{eq:single_layer_RRE_full_sin_bi_stable})
by defining the new variables
$\tau = \kappa t$, $\bar{\bar{x}}(t) = \sigma x(t)$, 
$\bar{\bar{y}}_i(t) = \sigma y_i(t)$ for $i = 1,2,3$~\cite{soloveichik_dna_2010}, 
where $\kappa, \sigma > 0$ are dimensionless, 
and fix the dimensionless parameter $\mu = 0.5$, leading to the \emph{rescaled} RREs:
\begin{align}
\frac{\mathrm{d} \bar{\bar{x}}_1}{\mathrm{d} \tau} & = 
-\frac{0.983}{\kappa \sigma}  \bar{\bar{x}}_1 \bar{\bar{y}}_1 
- \frac{0.050}{\kappa \sigma}  \bar{\bar{x}}_1 \bar{\bar{y}}_2 
+ \frac{2.398}{\kappa \sigma}  \bar{\bar{x}}_1 \bar{\bar{y}}_3,
 && \bar{\bar{x}}_1(0) = \sigma a_1, \nonumber\\
\frac{\mathrm{d} \bar{\bar{y}}_1}{\mathrm{d} \tau} & = 
2 \frac{\sigma}{\kappa} + \frac{15.578}{\kappa} \bar{\bar{y}}_1 
- \frac{2.334}{\kappa \sigma} \bar{\bar{x}}_1 \bar{\bar{y}}_1
- \frac{2}{\kappa \sigma} \bar{\bar{y}}_1^2, 
&& \bar{\bar{y}}_1(0) = \sigma b_1, \nonumber \\
\frac{\mathrm{d} \bar{\bar{y}}_2 }{\mathrm{d} \tau} & = 
2 \frac{\sigma}{\kappa} - \frac{3.836}{\kappa} \bar{\bar{y}}_2
+ \frac{1.988}{\kappa \sigma} \bar{\bar{x}}_1 \bar{\bar{y}}_2
- \frac{2}{\kappa \sigma} \bar{\bar{y}}_2^2, 
&& \bar{\bar{y}}_2(0) = \sigma b_2, \nonumber \\
\frac{\mathrm{d}  \bar{\bar{y}}_3}{\mathrm{d} \tau} & = 
2 \frac{\sigma}{\kappa} + \frac{7.148}{\kappa}  \bar{\bar{y}}_3 
- \frac{1.460}{\kappa \sigma} \bar{\bar{x}}_1  \bar{\bar{y}}_3
- \frac{2}{\kappa \sigma}  \bar{\bar{y}}_3^2, 
&&  \bar{\bar{y}}_3(0) = \sigma b_3.
\label{eq:single_layer_RRE_full_sin_bi_stable_rescaled}
\end{align}
Denoting the chemical species with the rescaled
concentrations $\bar{\bar{x}}_1$, $\bar{\bar{y}}_i$
by $\bar{\bar{X}}_1$ and $\bar{\bar{Y}}_i$, respectively,
the \emph{rescaled} RNCRN induced by~(\ref{eq:single_layer_RRE_full_sin_bi_stable_rescaled}) 
is given by 
\begin{align}
&\bar{\bar{X}}_1 + \bar{\bar{Y}}_1\xrightarrow{0.983/(\kappa \sigma)} \bar{\bar{Y}}_1, 
&&\bar{\bar{X}}_1 + \bar{\bar{Y}}_2\xrightarrow{0.050/(\kappa \sigma)} \bar{\bar{Y}}_2,
&&&\bar{\bar{X}}_1 + \bar{\bar{Y}}_3\xrightarrow{2.398/(\kappa \sigma)} 
2 \bar{\bar{X}}_1 + \bar{\bar{Y}}_3,\nonumber \\
&\bar{\bar{X}}_1 + \bar{\bar{Y}}_1\xrightarrow{2.334/(\kappa \sigma)} 
\bar{\bar{X}}_1,
&&\bar{\bar{X}}_1 + \bar{\bar{Y}}_2\xrightarrow{1.988/(\kappa \sigma)} 
\bar{\bar{X}}_1 + 2 \bar{\bar{Y}}_2,
&&&\bar{\bar{X}}_1 + \bar{\bar{Y}}_3\xrightarrow{1.460/(\kappa \sigma)} \bar{\bar{X}}_1, \nonumber \\
&\bar{\bar{Y}}_1\xrightarrow{15.578/\kappa} 2 \bar{\bar{Y}}_1,
&&\bar{\bar{Y}}_2\xrightarrow{3.836/\kappa} \varnothing,
&&&\bar{\bar{Y}}_3\xrightarrow{7.148/\kappa} 2 \bar{\bar{Y}}_3,\nonumber \\
&\varnothing \xrightarrow{2.000 \sigma/\kappa} \bar{\bar{Y}}_1,
&&\varnothing \xrightarrow{2.000 \sigma/\kappa} \bar{\bar{Y}}_2,
&&&\varnothing \xrightarrow{2.000 \sigma/\kappa} \bar{\bar{Y}}_3,\nonumber \\
&2 \bar{\bar{Y}}_1\xrightarrow{2.000/(\kappa \sigma)} \bar{\bar{Y}}_1,
&&2 \bar{\bar{Y}}_2 \xrightarrow{2.000/(\kappa \sigma)} \bar{\bar{Y}}_2,
&&&2 \bar{\bar{Y}}_3 \xrightarrow{2.000/(\kappa \sigma) }\bar{\bar{Y}}_3.
\label{eq:bistable_crn}
\end{align}

\textbf{DNA-RNCRN}. The rescaled RNCRN~(\ref{eq:bistable_crn})
can be mapped into a DNA-strand-displacement-based CRN~\cite{soloveichik_dna_2010},
which we call the DNA-RNCRN, and present in
the text files entitled ``domain\_level\_DNA\_full\_inits''
in the folder ``ex\_5\_dna\_implementation/CRNs''
available in~\cite{Dack_github_2024}.
This network consists of $78$ chemical species
(representing single- and double-stranded DNAs), 
and $53$ reactions, all of which involve
exactly two reactants. We fix
the scaling parameters to $\kappa = 3 \times 10^{4}$
and $\sigma = 5 \times 10^{-9}$; then, 
the underlying rate coefficients of all the
chemical reactions in the DNA-RNCRN 
can be shown to be between $7 M^{-1} s^{-1}$ 
and $10^{6} M^{-1} s^{-1}$, and the concentrations
of all of the species are then less than
$10^{-5} M$, i.e. within the physically admissible range.
We show the concentration of the executive species $\bar{\bar{x}}_1(t)$ 
from the DNA-RNCRN for various initial conditions 
as the red curves in Figure~\ref{fig:bi_stable_dna_implementation_full}(a);
also shown for comparison is the corresponding solution
from the rescaled target system~(\ref{eq:single_layer_RRE_full_sin_bi_stable_rescaled}).
One can notice that the DNA-RNCRN approximately displays the intended bistability for some interval of time. However, as time progresses
further, some of the fuel species are significantly depleted and, consequently, the DNA-RNCRN starts to deviate significantly from its target behavior.

\begin{figure}[H]
\centering
\includegraphics[align=t, width=0.69\columnwidth ]{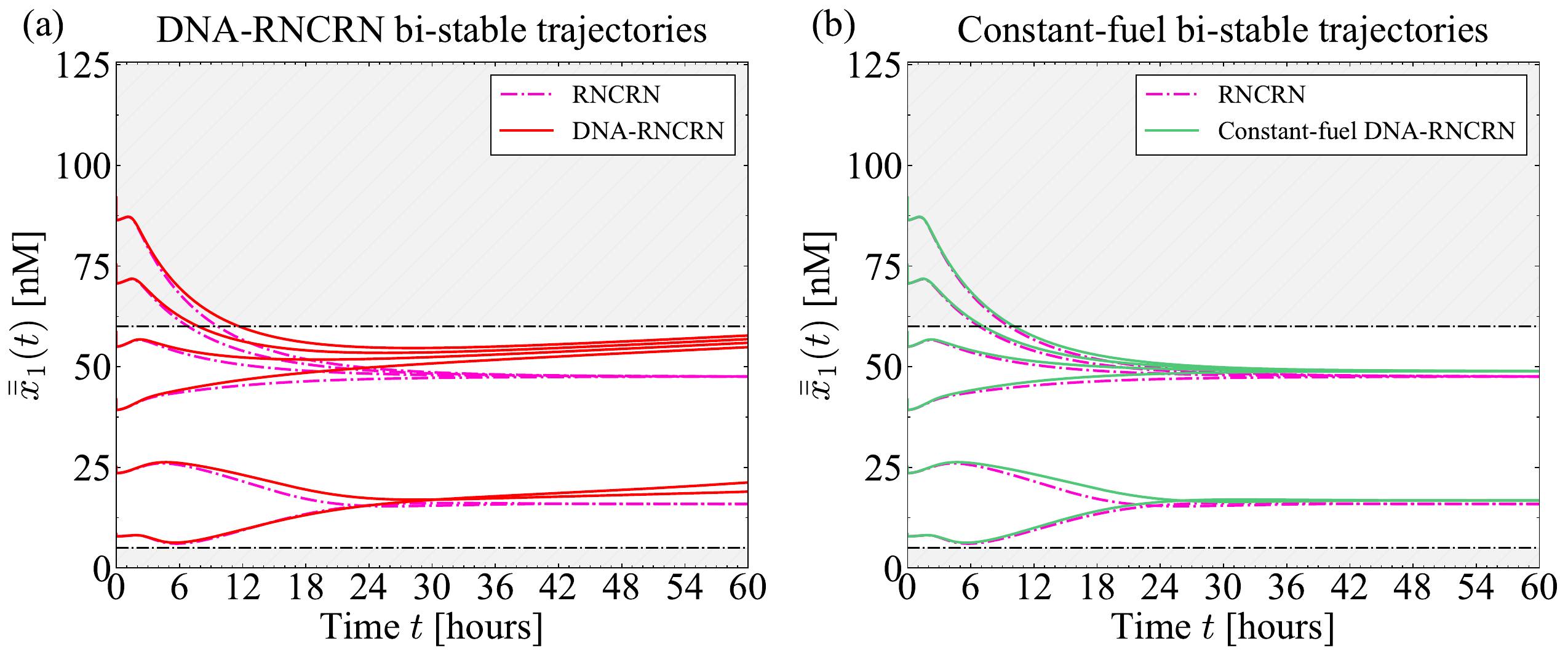}
\caption{DNA-strand-displacement implementation of the RNCRN
with RREs~(\ref{eq:single_layer_RRE_full_sin_bi_stable})
with $\mu = 0.5$. 
Panel (a) shows in red the concentration of the executive species
from the DNA-RNCRN, presented in~\cite{Dack_github_2024}.
Also shown in purple is the concentrations 
of the executive species from the RREs~(\ref{eq:single_layer_RRE_full_sin_bi_stable_rescaled}), 
with initial conditions $\bar{\bar{x}}_1(0) 
\in \{\SI{7.853981633974483e-09}{M}, 
\SI{2.356194490192345e-08}{M}, \SI{3.9269908169872415e-08}{M},  \SI{5.497787143782138e-08}{M}, \SI{7.068583470577035e-08}{M}, \SI{8.63937979737193e-08}{M}\}$, and zero initial concentration
for all chemical perceptrons.
Analogous plot is shown in panel (b), where
the green curve is the concentration of the executive
species from the constant-fuel RNCRN~(\ref{eq:dna_simple_rncrn}),
with initial conditions 
$\bar{\bar{x}}_{1}(0) \in \{\SI{8.368001073741345e-09}{M}, 
\SI{2.5104003221224034e-08}{M}, \SI{4.1840005368706726e-08}{M},  \SI{5.857600751618941e-08}{M}, \SI{7.53120096636721e-08}{M}, \SI{9.204801181115478e-08}{M}\}$, $C_{11}(0) = C_{12}(0)= C_{13}(0) = \SI{5.327234938736891e-09}{M}$, and all the other species
having zero concentration initially. Note that, due to 
the buffering modules, initial conditions for the executive
species from~(\ref{eq:bistable_crn}) and~(\ref{eq:dna_simple_rncrn})
are slightly different. The simulations were preformed using the full precision stated in the associated text files.}
    \label{fig:bi_stable_dna_implementation_full}
\end{figure} 

In what follows, we outline how the DNA-RNCRN is designed;
we abuse the notation slightly,
by following the notation from~\cite{soloveichik_dna_2010}.

\emph{Zero-order reactions}. 
The abstract zero-order reaction (i.e. reaction with $\varnothing$
as the reactants) of the form 
$\varnothing \xrightarrow{\gamma_i} \bar{\bar{Y}}_i$
is mapped to the DNA-based network given by 
\begin{align} 
C_i + G_i \xrightarrow{10^5 \gamma_i/\delta} O_i, \nonumber \\
 O_i + T_i \xrightarrow{10^6} C_i + \bar{\bar{Y}}_i,
\end{align}
where $C_i$, $\bar{\bar{Y}}_i$, and $O_i$ are single-stranded DNAs,
while $G_i$ and $T_i$ are double-stranded DNA complexes
called gates, acting as fuel species.
Let us note that $C_i$ plays an effectively catalytic role
in this reaction cascade.
The initial concentrations for the
fuel species $G_i$ and $T_i$ are set to respectively
$G_i(0) = 10^{-5} M$ and $T_i(0)= 10^{-5} M$.
Furthermore, to ensure the effective reaction rate $\gamma_i$, 
we fix the initial condition of $C_i$ to 
$C_i(0) = \delta M$; in particular, 
in the DNA-strand-displacement framework, effective rates of
reactions are calibrated using the initial conditions of suitable DNA species.

\emph{First-order reactions}. The abstract first-order reaction (i.e. reaction with only one reactant) of the form 
$\bar{\bar{Y}}_i \xrightarrow{\theta_i} 2 \bar{\bar{Y}}_i$
is mapped to the DNA-based network given by 
\begin{align} 
\bar{\bar{Y}}_i + G_i \xrightarrow{10^5 \theta_i} O_i, \nonumber \\
O_i + T_i \xrightarrow{10^6} 2 \bar{\bar{Y}}_i,
\end{align}
where $\bar{\bar{Y}}_i$ and $O_i$ are single-stranded DNAs and, 
as before, $G_i$ and $T_i$ are fuel species with initial concentrations 
$G_i(0) = T_i(0)= 10^{-5} M$. 
A similar DNA-based network is designed for 
abstract reactions of the form 
$Y_i \xrightarrow{\theta_i} \varnothing$.

\emph{Second-order reactions}. Finally, 
the abstract second-order reaction 
(i.e. reaction with two reactants) of the form 
$2 \bar{\bar{Y}}_i \xrightarrow{\kappa_i} \bar{\bar{Y}}_i$
is mapped to the DNA-based network given by 
\begin{align} 
\bar{\bar{Y}}_i + L_i  \xrightleftharpoons[10^6]{\kappa_i} 
H_i + B_i, \nonumber \\
\bar{\bar{Y}}_i + H_i \xrightarrow{10^6} O_i,  \nonumber \\
O_i + T_i \xrightarrow{10^6} \bar{\bar{Y}}_i, 
\end{align}
where $\bar{\bar{Y}}_i$, $B_i$, and $O_i$ are single-stranded,
while $L_i$, $H_i$, $T_i$ double-stranded DNAs. 
The initial concentrations
for the fuel species are set to 
$L_i(0) = T_i(0)= B_i(0)= 10^{-5} M$. Similar DNA-based networks 
are designed for the other variants of second-order reactions.

\emph{Buffering}. We also introduce into the 
DNA-RNCRN the so-called 
buffering modules~\cite{soloveichik_dna_2010}.
These modules require that certain rate coefficients
and initial conditions are multiplied by a factor,
which we account for in the presented results.

\textbf{Constant-fuel DNA-RNCRN}. 
To address the deviation of the DNA-RNCRN from its target behavior displayed in Figure~\ref{fig:bi_stable_dna_implementation_full}(a), 
the fuel species must be replenished.
In this paper, we do not model such replenishment 
explicitly; instead, for simplicity, we 
assume that the concentrations of 
all fuel species in the DNA-RNCRN are constant.
Then, this network can be approximated~\cite{soloveichik_dna_2010} 
by the \emph{constant-fuel} DNA-RNCRN, given by 
\begin{align}&\bar{\bar{Y}}_{2} \xrightleftharpoons[\num{10}]{\num{0.5124102791074654}}HS_{2} ,&&\bar{\bar{Y}}_{1} \xrightleftharpoons[\num{10}]{\num{0.5124102791074654}}HS_{1} ,&&&\bar{\bar{Y}}_{3} \xrightleftharpoons[\num{10}]{\num{0.5124102791074654}}HS_{3}, \nonumber \\ 
&\bar{\bar{X}}_{1} \xrightleftharpoons[\num{10}]{\num{0.0698546537735527}}H_{1} ,&&\bar{\bar{Y}}_{1} +H_{1} \xrightarrow{\num{1.00E+06}}\bar{\bar{Y}}_{1} ,&&&\bar{\bar{X}}_{1} \xrightleftharpoons[\num{10}]{\num{0.0035581445989837993}}H_{2}, \nonumber \\ 
&\bar{\bar{Y}}_{2} + H_{2} \xrightarrow{\num{1.00E+06}}\bar{\bar{Y}}_{2}, &&\bar{\bar{X}}_{1} \xrightleftharpoons[\num{10}]{\num{0.17034970193398138}}H_{3} ,&&&\bar{\bar{Y}}_{3} +H_{3} \xrightarrow{\num{1.00E+06}} 2 \bar{\bar{X}}_{1} +\bar{\bar{Y}}_{3}, \nonumber \\ 
&\bar{\bar{X}}_{1} \xrightleftharpoons[\num{10}]{\num{0.1657846429799732}}H_{4}, &&\bar{\bar{Y}}_{1} +H_{4} \xrightarrow{\num{1.00E+06}}\bar{\bar{X}}_{1},&&&\bar{\bar{X}}_{1} \xrightleftharpoons[\num{10}]{\num{0.14119049474173975
}}H_{5}, \nonumber \\ 
&\bar{\bar{Y}}_{2} +H_{5} \xrightarrow{\num{1.00E+06}} 
\bar{\bar{X}}_{1} + 2 \bar{\bar{Y}}_{2}, &&\bar{\bar{X}}_{1} \xrightleftharpoons[\num{10}]{\num{0.10373223944555172}}H_{6}, &&&\bar{\bar{Y}}_{3} + H_{6} \xrightarrow{\num{1.00E+06}} \bar{\bar{X}}_{1}, \nonumber \\ 
&\bar{\bar{Y}}_{1} \xrightarrow{\num{0.0005532782039943088}}O_{7} ,&&O_{7} \xrightarrow{\num{10}} 2 \bar{\bar{Y}}_{1}, &&&\bar{\bar{Y}}_{2} \xrightarrow{\num{0.00013624981070398546
}}O_{8}, \nonumber \\ 
&\bar{\bar{Y}}_{3} \xrightarrow{\num{0.00025382830078090154}}O_{9} ,&&O_{9} \xrightarrow{\num{10}} 2 \bar{\bar{Y}}_{3},&&& C_{11} \xrightarrow{\num{7.102979918315853e-05}}O_{11}, \nonumber \\ 
& O_{11} \xrightarrow{\num{10}} C_{11} +\bar{\bar{Y}}_{1}, 
&& C_{12} \xrightarrow{\num{7.102979918315853e-05}} O_{12},&&& O_{12} \xrightarrow{\num{10}} C_{12} +\bar{\bar{Y}}_{2}, 
\nonumber \\ 
&C_{13} \xrightarrow{\num{7.102979918315853e-05}}O_{13},&& O_{13} \xrightarrow{\num{10}} C_{13} +\bar{\bar{Y}}_{3}, &&&\bar{\bar{Y}}_{1} \xrightleftharpoons[\num{10}]{\num{0.1420595983663171}}H_{14}, \nonumber \\ 
&\bar{\bar{Y}}_{1} +H_{14} \xrightarrow{\num{1.00E+06}}\bar{\bar{Y}}_{1} ,&&\bar{\bar{Y}}_{2} \xrightleftharpoons[\num{10}]{\num{0.1420595983663171
}}H_{15}, &&&\bar{\bar{Y}}_{2} + H_{15} \xrightarrow{\num{1.00E+06}}\bar{\bar{Y}}_{2}, \nonumber \\ 
&\bar{\bar{Y}}_{3} \xrightleftharpoons[\num{10}]{\num{0.1420595983663171}}H_{16}, &&\bar{\bar{Y}}_{3} +H_{16} \xrightarrow{\num{1.00E+06}}\bar{\bar{Y}}_{3}, 
\label{eq:dna_simple_rncrn} \end{align}
with $C_{11}(0) = C_{12}(0)= C_{13}(0) = \SI{5.327234938736891e-09}{M}$, 
while the initial conditions for all the other 
auxiliary species (i.e. all species except 
the executive species $X_1$ 
and chemical pereptrons $Y_1, Y_2, Y_3$)
are set to zero. Let us note that the fuel species
concentrations from the DNA-RNCRN, 
assumed to be fixed to constant $10^{-5} M$, are absorbed 
into effective rate coefficients displayed above
the reaction arrows in~(\ref{eq:dna_simple_rncrn}).
The concentration of the executive species $\bar{\bar{x}}_1(t)$
from~(\ref{eq:dna_simple_rncrn}) is shown in  
Figure~\ref{fig:bi_stable_dna_implementation}(b), 
which we repeat for convenience as Figure~\ref{fig:bi_stable_dna_implementation_full}(b).
One can observe that the constant-fuel DNA-RNCRN
displays the desired behavior which, owing to 
effective replenishment of the fuel species, is now
maintained over the desired interval of time.

\section{Appendix: Robustness}
\label{sec:robustness}
In this section, we provide a preliminary investigation of robustness
of some RNCRNs approximating the oscillatory target system~(\ref{eq:bessel_osc}).
In particular, we first consider the  
RNCRN with RREs~(\ref{eq:full_rncrn_oscil}).
This RNCRN was not optimized for robustness;
nevertheless, we show that it is not 
pathologically sensitive to perturbations.
We then apply a noisy version of 
Algorithm~\ref{algh:RNCRN} on the target system~(\ref{eq:bessel_osc}) to produce a new oscillatory RNCRN which is 
more robust to perturbations. 

\subsection{Rate-coefficient robustness}
\label{sec:rate_robustness}
Let us consider RREs~(\ref{eq:full_rncrn_oscil}) 
with perturbed rate coefficients and fixed 
perceptron speed $\mu$, given by
\begin{align}
\frac{\mathrm{d} \hat{x}_i}{\mathrm{d} t} & = \hat{\beta}_{i} +  \hat{x}_i \sum_{j=1}^M \hat{\alpha}_{i,j} \hat{y}_j,  
&& \hat{x}_i(0) = a_i, 
\; \; \; \textrm{for } i = 1, 2, \nonumber \\
\frac{\mathrm{d} \hat{y}_j}{\mathrm{d} t} & =  
\frac{\hat{\gamma}}{\mu} 
+ \frac{\hat{\theta}_{j}}{\mu} \hat{y}_j 
+ \left(\sum_{i=1}^N \frac{\hat{\omega}_{j,i}}{\mu} \hat{x}_i \right) y_j - \frac{\hat{\kappa}_j}{\mu} \hat{y}_j^2, 
&& \hat{y}_j(0) = b_j, 
\; \; \; \textrm{for } j = 1, 2, \ldots, 6,
\label{eq:single_layer_RRE_noise}
\end{align}
with coefficients
\begin{align}
\hat{\beta}_{i} & =  1 + \eta r_{\beta_i}, 
\; \; 
\hat{\alpha}_{i,j} =  \left(1 + \eta r_{\alpha_{i,j}} \right) 
\alpha_{i,j},
\; \; 
\hat{\gamma} = 1 + \eta r_{\gamma}, \nonumber \\
\hat{\omega}_{j,i} & = \left(1 + \eta r_{\omega_{j,i}} \right) \omega_{j,i}, 
\; \; 
\hat{\theta}_{j} = 
\left(1 + \eta r_{\theta_j} \right) \theta_j,
\; \; 
\hat{\kappa}_{i} =  1 + \eta r_{\kappa_i}, 
\label{eq:single_layer_RRE_noise_coeff}
\end{align}
where $r \in (-1,1)$ with appropriate subscripts are
independent uniformly distributed random variables, 
scaled with a parameter $\eta \in [0,1]$, 
i.e. $\eta r$ is uniformly distributed in $[-\eta,\eta]$.

To study robustness, we fix the initial conditions to
$a_1 = 2$, $a_2 = 4$ and $b_j = 0$,
the perceptron speed to $\mu = 0.1$, 
and the noise strength to $\eta \in [0,1]$. 
Then, for a fixed sampling of coefficients~(\ref{eq:single_layer_RRE_noise_coeff}), 
we numerically solve the resulting perturbed RREs~(\ref{eq:single_layer_RRE_noise}).
We then verify if the solution $\hat{x}_1(t)$ 
is a periodic function of time 
within the time-interval $t \in [0,60]$; if this is the case, 
then we say that RREs~(\ref{eq:full_rncrn_oscil})
are robust with respect to this particular set of perturbations.
For this verification,  we use a fast Fourier transform 
and check for the existence of suitable peaks, above a threshold, in the frequency domain (see code~\cite{Dack_github_2024} for implementation).
We then repeat this procedure for sufficiently many samplings 
of~(\ref{eq:single_layer_RRE_noise_coeff}), and compute
the ratio of oscillatory to non-oscillatory perturbed systems, 
which we use as an estimate for rate-coefficent robustness
for fixed $\eta$. We then repeat the same computations for
various values of noise strength $\eta \in [0,1]$.
See Figure~\ref{fig:bessel_osc_traj_cherr_net_robustness}(a)--(b).

\subsection{Perceptron-speed robustness}
\label{app:perceptron_speed_robustness}
To study robustness with respect to the
parameter $\mu$, we numerically solve the RREs~(\ref{eq:full_rncrn_oscil}) 
with rate coefficients~(\ref{eq:coefficents_oscill})
and various values of $\mu$.
The results are shown in Figure~\ref{fig:bessel_osc_traj_cherr_net_robustness}(c)--(d).

\subsection{Initial-condition robustness}
\label{app:init_conc_robustness}
Let us now consider RREs~(\ref{eq:full_rncrn_oscil}) with perturbed initial conditions, given by
\begin{align}
\frac{\mathrm{d} \hat{x}_i}{\mathrm{d} t} & = \beta_{i} +  \hat{x}_i \sum_{j=1}^M \alpha_{i,j} \hat{y}_j,  
&& \hat{x}_i(0) = (1 + \eta r_{a_i}) a_i, 
\; \; \textrm{for } i = 1, 2, \nonumber \\
\frac{\mathrm{d} \hat{y}_j}{\mathrm{d} t} & =  
\frac{\gamma}{\mu} 
+ \frac{\theta_{j}}{\mu} \hat{y}_j 
+ \left(\sum_{i=1}^N \frac{\omega_{j,i}}{\mu} \hat{x}_i \right) y_j - \frac{\kappa_j}{\mu} \hat{y}_j^2,  
&& \hat{y}_j(0) = (1 + \eta r_{b_j}) b_j, 
\; \; \textrm{for } j = 1, 2, \ldots, 6,
\label{eq:single_layer_RRE_noise_ICs}
\end{align}
where $r \in (-1,1)$ with appropriate subscripts are
independent uniformly distributed random variables, 
and parameter $\eta \in [0,1]$.

We fix the rate coefficients from~(\ref{eq:single_layer_RRE_noise_ICs})
to~(\ref{eq:coefficents_oscill}) and $\mu = 0.1$.
Furthermore, we fix $a_1 = 2$ and $a_2=4$, 
while $b_j$ are fixed to the equilibrium values of
the chemical perceptrons:
$b_1 = 1.79$,  $b_2 = 0.17$,   $b_3 = 0.20$, 
$b_4 = 4.19$, $b_5 = 0.42$, and $b_6 = 1.22$.
The results are presented in Figure~\ref{fig:bessel_osc_traj_cherr_net_robustness}(e)--(f).

\subsection{Noisy training}
\label{app:noisy_training}
The RNCRN with RREs~(\ref{eq:full_rncrn_oscil})
has been produced by applying Algorithm~\ref{algh:RNCRN}
with deterministic coefficients 
$\beta_i, \alpha_{i,j}, \gamma, \omega_{j,i}, \theta_j$
in~(\ref{eq:single_layer_RRE_reduced}). 
In order to design an RNCRN which is more robust
to perturbations in the rate coefficients, 
one may wish to instead apply Algorithm~\ref{algh:RNCRN}
with coefficients $\beta_i, \alpha_{i,j}, \gamma, \omega_{j,i}, \theta_j$
being stochastically perturbed 
at each iteration of the backpropagation algorithm.
For the purpose of this paper, instead of 
perturbing each of the coefficients, we introduce 
only three perturbations, which we choose 
in a way natural to our computer code. 
In particular, let us now consider Algorithm~\ref{algh:RNCRN} 
with one modification: in step (a),
we ensure that the distance between 
$(f_i(x_1,x_2,\ldots,x_N)/x_i - \beta_i/x_i)$ and 
$\sum_{j=1}^M \alpha_{i,j}^* 
\left( \delta^{(y)}_j + \sigma_{\gamma} \left(\sum_{k=1}^N 
\omega_{j,k}^* (x_k+\delta_k^{(x)}) + \theta_{j}^* \right) \right) + \delta_i^{(\beta)}$ is within the tolerance, 
instead of $\sum_{j=1}^M \alpha_{i,j}^* 
\sigma_{\gamma} \left(\sum_{k=1}^N 
\omega_{j,k}^* x_k + \theta_{j}^* \right)$.
Here, $\delta_k^{(x)}$, $\delta^{(y)}_j$, $\delta_i^{(\beta)}$
are independent normal random variables with zero mean and variance
$\eta^2$, which we prescribe. These three random variables
are resampled at each iteration of the backpropagation algorithm.
See the code presented in the folder entitled 
``step\_4\_noisy\_training'' within the folder  ``ex\_6\_robustness'' 
available in~\cite{Dack_github_2024}.

\begin{figure}[H]
\centering
\includegraphics[width=\columnwidth]{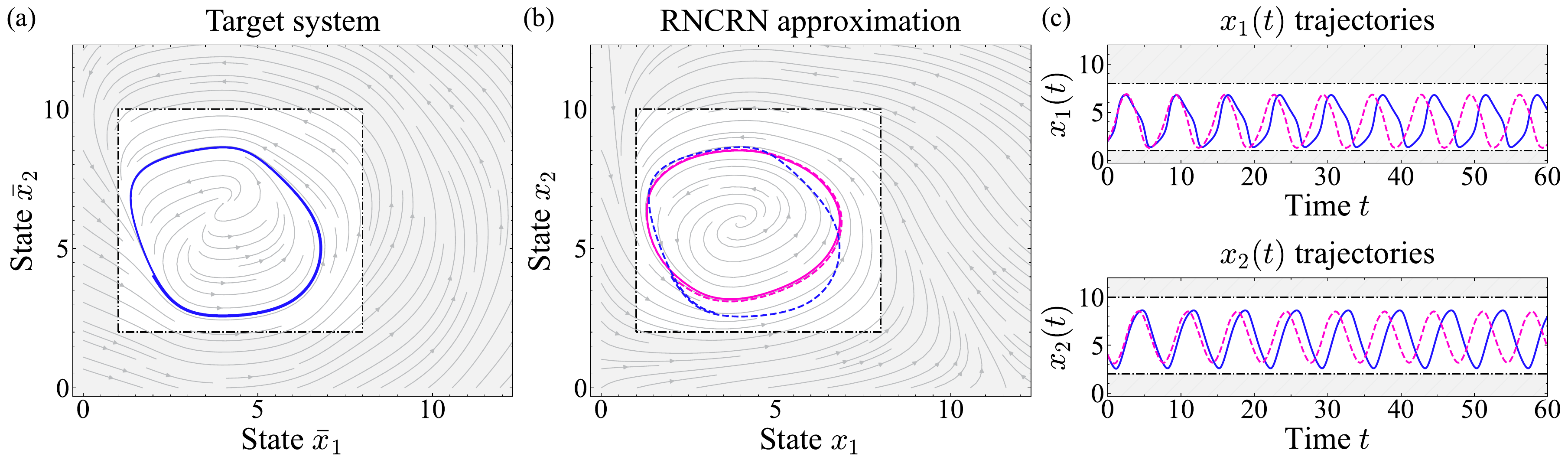}
\caption{RNCRN approximation of the oscillatory target system~(\ref{eq:bessel_osc}) trained with noisy Algorithm~\ref{algh:RNCRN}. 
(a) The vector field of the target system~(\ref{eq:bessel_osc}) (grey arrows) for $\mathbb{K}_1
\times \mathbb{K}_2 = [1,8] \times [2, 10]$, and a solution of~(\ref{eq:bessel_osc}) with $\bar{x}_1(0)=2$ and $\bar{x}_2(0)=4$ (blue). (b) Analogous plot is shown for the RNCRN whose reduced and full ODEs are given respectively by~(\ref{eq:single_layer_RRE_reduced_oscill_noise}) 
and~(\ref{eq:full_rncrn_oscil_noise}), with 
coefficients~(\ref{eq:coefficents_oscill_noise}) and $\mu = 0.01$. 
In particular, displayed as grey arrows is the vector field
of the reduced ODEs~(\ref{eq:single_layer_RRE_reduced_oscill_noise}), 
together with an oscillatory solution of the full ODEs~(\ref{eq:full_rncrn_oscil_noise}) shown in purple; 
for comparison, we also display as dashed blue curve 
the oscillatory solution of~(\ref{eq:bessel_osc}) from (a).
Panel (c) displays the solutions $\bar{x}_1(t)$ and $x_1(t)$, 
and $\bar{x}_2(t)$ and $x_2(t)$. The initial concentrations of the perceptrons are all set to zero.}
\label{fig:bessel_noise_trained_RNCRN}
\end{figure}

Applying this noisy version of Algorithm~\ref{algh:RNCRN} to the target system~(\ref{eq:bessel_osc}) on $\mathbb{K}_1 \times \mathbb{K}_2 = [1,8] \times [2, 10]$, 
with tolerance $\varepsilon \approx 10^{-1}$ and variance $\eta^2=1$, 
we find an RNCRN with $M=6$ perceptrons, coefficients $\beta_1=\beta_2 = 1$, $\gamma = 1$, and rounded to 3 decimal places:
\begin{align}
    \boldsymbol{\alpha}_1 &= \begin{pmatrix}
-0.034\\
-0.040\\
0.017\\
0.036 \\
0.037 \\
-0.011
    \end{pmatrix}, \; \; 
    \boldsymbol{\alpha}_2 = \begin{pmatrix} 
0 \\
-0.014\\
-0.023 \\
-0.013 \\
0.011 \\
0.020 
    \end{pmatrix}, \;\;
    \boldsymbol{\theta} &= \begin{pmatrix}
    -18.748 \\
    34.068 \\	
    39.575\\
    36.980 \\
    49.304\\	
    -1.085
    \end{pmatrix}, \;\;
    \boldsymbol{\omega}_1 = \begin{pmatrix}
    -2.101 \\
    -7.724 \\
    -5.370 \\
    -1.695 \\
    -13.497 \\
    5.802
    \end{pmatrix}, \;\;
    \boldsymbol{\omega}_2 = \begin{pmatrix}
        5.662 \\
    0.522 \\
    -4.331 \\
    -3.996 \\
    -3.471 \\	
    -3.901
    \end{pmatrix},
    \label{eq:coefficents_oscill_noise}
\end{align}
where $\boldsymbol{\alpha}_i = (\alpha_{i,1},\alpha_{i,2},\ldots, \alpha_{i,6})^{\top}$ for $i = 1,2$, 
$\boldsymbol{\theta} = 
(\theta_{1},\theta_{2},\theta_{3},\theta_{4},\theta_{5}, \theta_{6})^{\top}$, and  $\boldsymbol{\omega}_k = 
(\omega_{1,k},\omega_{2,k},\ldots,\omega_{6,k})^{\top}$
for $k = 1, 2$.
The reduced ODEs are given by 
\begin{align}
\frac{\mathrm{d} \tilde{x}_1}{\mathrm{d} t} & = 
g_1(\tilde{x}_1, \tilde{x}_2) = 1 + \tilde{x}_1 \sum_{j=1}^{6}\alpha_{1,j} 
\sigma_{1} \left(\omega_{j,1} \tilde{x}_1 + \omega_{j,2} \tilde{x}_2 + \theta_{j} \right), \nonumber \\
\frac{\mathrm{d} \tilde{x}_2}{\mathrm{d} t} & = 
g_2(\tilde{x}_1, \tilde{x}_2) = 1 + \tilde{x}_2 \sum_{j=1}^{6}\alpha_{2,j} 
\sigma_{1} \left(\omega_{j,1} \tilde{x}_1 + \omega_{j,2} \tilde{x}_2 + \theta_{j} \right), 
\label{eq:single_layer_RRE_reduced_oscill_noise}
\end{align}
 while the full ODEs read
\begin{align}
\frac{\mathrm{d} x_1}{\mathrm{d}t} 
& = 1+ x_1\left(\sum_{j=1}^6 \alpha_{1,j}y_j\right), 
&& \hat{x}_1(0) = a_1 \in \mathbb{K}_1, \nonumber \\
\frac{\mathrm{d} x_2}{\mathrm{d}t} 
& = 1+ x_2\left(\sum_{j=1}^6 \alpha_{2,j}y_j\right), 
&& \hat{x}_2(0) = a_2 \in \mathbb{K}_2, \nonumber \\
\frac{\mathrm{d} y_j}{\mathrm{d}t} 
&= \frac{1}{\mu} + \frac{\theta_{j}}{\mu} y_j 
+\left( \sum_{i=1}^2 \frac{\omega_{j,i}}{\mu} x_i \right) y_j 
- \frac{1}{\mu} y_j^2, 
&& y_j(0)= b_j \geq 0, 
\; \; \; j = 1,2,\ldots,6.
\label{eq:full_rncrn_oscil_noise}
\end{align}
We present the dynamics of~(\ref{eq:full_rncrn_oscil_noise}) in Figure~\ref{fig:bessel_noise_trained_RNCRN},
which is analogous to Figure~\ref{fig:bessel_osc_traj_cherr_net}.
Furthermore, in Figure~\ref{fig:bessel_osc_traj_cherr_net_robustness}(a), 
we show that this RNCRN displays superior robustness.

\section{Appendix: Alternative approximation scheme}
\label{sec:compare}
In this section, we outline an alternative method
for approximating non-polynomial ODEs with RREs~\cite{kerner_universal_1981,kowalski_universal_1993}, 
and point out some of its drawbacks.

\textbf{Non-polynomial to polynomial ODEs}.
Consider again the non-polynomial ODE~(\ref{eq:multistable_ex}).
To map it to a polynomial one, we apply the method from~\cite{kerner_universal_1981}: 
we introduce the auxiliary variables
$y_1 = \sin{(x_1)}$ and $y_2 = \cos{(x_1)}$, and use
the chain-rule to obtain
\begin{align}
\frac{\mathrm{d} x_1}{\mathrm{d}t} &= y_1, 
&&x_1(0) = a_1,\nonumber\\
\frac{\mathrm{d} y_1}{\mathrm{d}t} &= y_1y_2, 
&&y_1(0) = \sin(a_1), \nonumber \\
\frac{\mathrm{d} y_2}{\mathrm{d}t} &= -y_1^2, 
&&y_2(0) = \cos(a_1).
\label{eq:polynomial_neg_init}
\end{align}

\textbf{Polynomial ODEs to RREs}. 
To ensure that the initial conditions for $y_1$ and $y_2$
are non-negative, we translate the variables according to
$z_1(t) = T_1 + y_1(t)$ and $z_2(t) = T_2 + y_2(t)$, 
where $T_1, T_2 >1$; for concreteness, we fix $T_1 = T_2 = 2$.
Then, (\ref{eq:polynomial_neg_init}) becomes
\begin{align}
\frac{\mathrm{d} x_1}{\mathrm{d}t} 
& = - 2 + z_1, 
&&x_1(0) = a_1,\nonumber\\
\frac{\mathrm{d} z_1}{\mathrm{d}t} 
& = 4 - 2 z_1 - 2 z_2 + z_1 z_2, 
&&z_1(0) = 2 + \sin(a_1), \nonumber \\
\frac{\mathrm{d} z_2}{\mathrm{d}t} 
& = - 4 + 4 z_1 -z_1^2, 
&&z_2(0) = 2 + \cos(a_1).
\label{eq:polynomial_pos_init}
\end{align}
Terms $-2$, $-2 z_2$ and $- 4 -z_1^2$
in respectively the first, second and third equation 
of~(\ref{eq:polynomial_pos_init}) cannot be 
interpreted as chemical reactions~\cite{plesa_chemical_2016}.
Multiple methods exist to eliminate such non-chemical terms~\cite{plesa_chemical_2016,samardzija_nonlinear_1989,poland_cooperative_1993,  holcman_test_2017,plesa_integral_2021, hangos_mass_2011};
in what follows, we use incomplete 
Carleman embedding~\cite{kowalski_universal_1993}.
In particular, introducing additional variables
 $z_3 = 1/x_1$, $z_4 = 1/z_1$, and $z_5 = 1/z_2$, 
 and applying the chain-rule, yields
\begin{align}
\frac{\mathrm{d} x_1}{\mathrm{d}t} 
& = - 2 x_1 z_3 + z_1, 
&& x_1(0) = a_1,\nonumber\\
\frac{\mathrm{d} z_1}{\mathrm{d}t} 
& = 4 - 2 z_1 - 2 z_1 z_2 z_4 + z_1 z_2, 
&& z_1(0) = 2 + \sin(a_1), \nonumber \\
\frac{\mathrm{d} z_2}{\mathrm{d}t} 
& = - 4 z_2 z_5 + 4 z_1 - z_1^2 z_2 z_5, 
&& z_2(0) = 2 + \cos(a_1), \nonumber \\
\frac{\mathrm{d} z_3}{\mathrm{d}t} 
& = 2 z_3^2 - z_1 z_3^2,
&& z_3(0) = \frac{1}{a_1},\nonumber\\
\frac{\mathrm{d} z_4}{\mathrm{d}t} 
& =  2 z_4 - 4 z_4^2 - z_2 z_4 + 2 z_2 z_4^2, 
&& z_4(0) = \frac{1}{2 + \sin(a_1)}, \nonumber \\
\frac{\mathrm{d} z_5}{\mathrm{d}t} 
& = 4 z_5^2 - 4 z_1 z_5^2 +z_1^2 z_5^2, 
&& z_5(0) = \frac{1}{2 + \cos(a_1)}. 
\label{eq:RRE_analytic_high_order}
\end{align}
RREs~(\ref{eq:RRE_analytic_high_order}) contain
terms of degree as high as four; for example, the term
$- z_1^2 z_2 z_5$. This term induces the reaction 
$2Z_1 + Z_2 + Z_5 \xrightarrow{1} 2Z_1 + Z_5$, 
which has four reactants, and is therefore 
experimentally infeasible. 
One could approximate such higher-order reactions
with systems of second-order ones~\cite{wilhelm_chemical_2000,plesa_stochastic_2023}.
However, for the purpose of this section,  
RREs~(\ref{eq:RRE_analytic_high_order}) are sufficient.

\textbf{Initial-condition robustness}.
Let us now numerically study robustness of~(\ref{eq:RRE_analytic_high_order}) with respect 
to the initial conditions.
In Figure~\ref{fig:carleman_approx}(a), 
we display the solution $x_1(t)$ of~(\ref{eq:RRE_analytic_high_order}),
i.e. with ideal initial conditions. 
One can notice that $x_1(t)$ from~(\ref{eq:RRE_analytic_high_order}) 
is \emph{identical} to $\bar{x}_1(t)$
from the target system~(\ref{eq:multistable_ex}).
However, note that this perfect accuracy 
requires one to perfectly adjust the initial conditions
for the auxiliary species $z_i(0)$ for any given 
initial condition $x_i(0)$, i.e. if one changes $x_i(0)$, 
then one also has to appropriately 
change $z_i(0)$ for all $i = 1,2,\ldots, 5$.
Critically, as we now show, this perfect accuracy
can deteriorate catastrophically under some perturbations 
of the initial conditions. 

In particular, let us consider~(\ref{eq:RRE_analytic_high_order})
with perturbed initial conditions:
\begin{align}
\frac{\mathrm{d} \hat{x}_1}{\mathrm{d}t} 
& = - 2 \hat{x}_1 \hat{z}_3 + \hat{z}_1, 
&& \hat{x}_1(0) = (1+\eta r_{1}) a_1,\nonumber\\
\frac{\mathrm{d} \hat{z}_1}{\mathrm{d}t} 
& = 4 - 2 \hat{z}_1 - 2 \hat{z}_1 \hat{z}_2 \hat{z}_4 
+ \hat{z}_1 \hat{z}_2, 
&& \hat{z}_1(0) = (1 + \eta r_{2})(2 + \sin(a_1)), \nonumber \\
\frac{\mathrm{d} \hat{z}_2}{\mathrm{d}t} 
& = - 4 \hat{z}_2 \hat{z}_5 + 4 \hat{z}_1
- \hat{z}_1^2 \hat{z}_2 \hat{z}_5, 
&& \hat{z}_2(0) = (1 + \eta r_{3}) (2 + \cos(a_1)), \nonumber \\
\frac{\mathrm{d} \hat{z}_3}{\mathrm{d}t} 
& = 2 \hat{z}_3^2 - \hat{z}_1 \hat{z}_3^2,
&& \hat{z}_3(0) = (1 + \eta r_{4}) \frac{1}{a_1},\nonumber\\
\frac{\mathrm{d} \hat{z}_4}{\mathrm{d}t} 
& =  2 \hat{z}_4 - 4 \hat{z}_4^2 - \hat{z}_2 \hat{z}_4 
+ 2 \hat{z}_2 \hat{z}_4^2, 
&& \hat{z}_4(0) = (1 + \eta r_{5}) \frac{1}{2 + \sin(a_1)}, \nonumber \\
\frac{\mathrm{d} \hat{z}_5}{\mathrm{d}t} 
& = 4 \hat{z}_5^2 - 4 \hat{z}_1 \hat{z}_5^2 
+ \hat{z}_1^2 \hat{z}_5^2, 
&& \hat{z}_5(0) = (1 + \eta r_{6}) \frac{1}{2 + \cos(a_1)}, 
\label{eq:RRE_analytic_high_order_perturbed}
\end{align}
where, as in Appendix~\ref{app:init_conc_robustness}, 
$r_i \in (-1,1)$ are independent uniformly 
distributed random variables, 
and parameter $\eta \in [0,1]$.
In the top panel of Figure~\ref{fig:carleman_approx}(b), we show 
as a function of $\eta$
the proportion of perturbed systems~(\ref{eq:RRE_analytic_high_order_perturbed}),
over a range of initial conditions, whose 
solutions $x_1(t)$ 
converge to within $\pm \pi/4$ of the intended equilibrium; 
one can notice an increasing
fragility beyond $\eta = 10\%$. Critically, 
for some non-ideal initial conditions, concentrations of some
of the auxiliary species grow unboundedly (blow-up),
as shown in the bottom panel of Figure~\ref{fig:carleman_approx}(b).
Let us note that no such chemically hazardous behavior
is observed in the RNCRN. In Figure~\ref{fig:carleman_approx}(c), 
we display a particular data point from Figure~\ref{fig:carleman_approx}(b), showing that $x_1(t)$ from~(\ref{eq:RRE_analytic_high_order_perturbed}) can fail to converge
close to the intended equilibrium $\pi$, 
while, at the same time, the auxiliary species concentration
$z_5(t)$ blows up.

\begin{figure}[ht]
\centering
\includegraphics[width=\columnwidth]{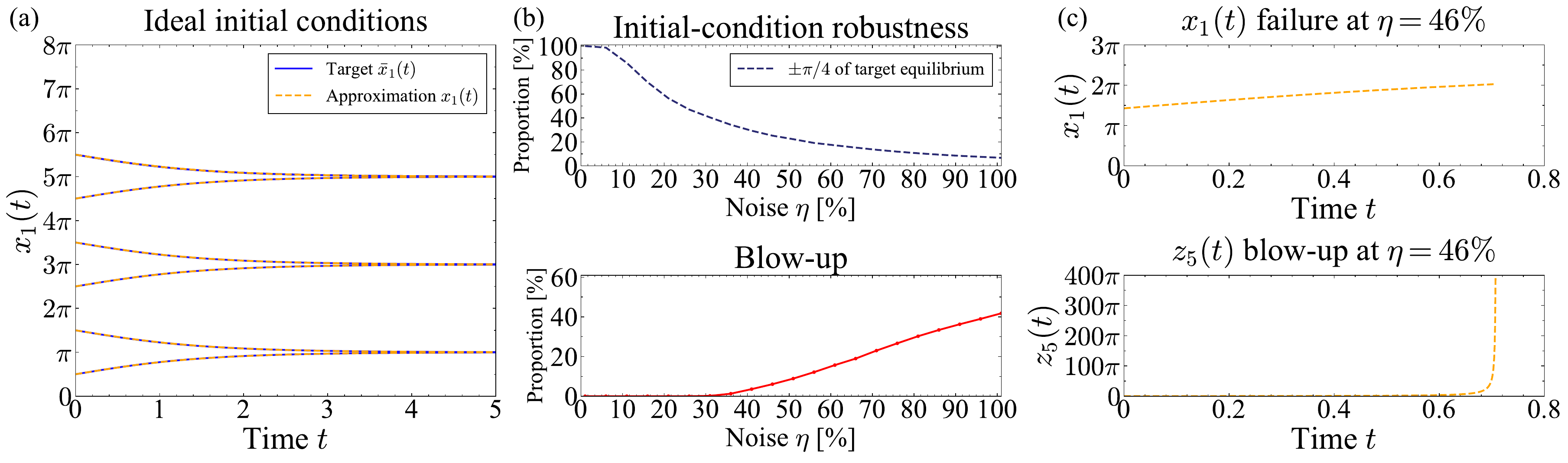}
\caption{Initial-condition sensitivity of system~(\ref{eq:RRE_analytic_high_order}) that approximates
target system~(\ref{eq:multistable_ex}).
Panel (a) shows solution $\bar{x}_1(t)$ of~(\ref{eq:multistable_ex}), and $x_1(t)$ of~(\ref{eq:RRE_analytic_high_order}). 
Top sub-panel (b) shows the proportion of perturbed systems~(\ref{eq:RRE_analytic_high_order_perturbed}), 
over a range of initial conditions,
whose solutions $x_1(t)$ are within $\pm \pi/4$ 
of the desired equilibrium at $t=6$ 
as a function of noise strength $\eta$. 
Bottom sub-panel (b) shows the proportion of~(\ref{eq:RRE_analytic_high_order_perturbed})
with at least one auxiliary species that blows up
before $t = 6$.
Each point in panel (b) was obtained by simulating~(\ref{eq:RRE_analytic_high_order_perturbed})
with $10^4$ sampled initial conditions.  
Panel (c) shows an example of failure of $x_1(t)$
to converge close to the desired equilibrium,
and a blow-up of an auxiliary species. 
The initial conditions are as follows:
 $x_1(t) = 4.48121513$, $z_1(0) =  4.13072385$, $z_2(0) = 1.55830895$, $z_3 (0) = 0.08109365$, $z_4(0) =  0.22538005$, $z_5(0) = 0.56915378$.}
\label{fig:carleman_approx}
\end{figure}

\section{Appendix: Algorithm~\ref{algh:RNCRN} pseudocode}
\label{sec:pseudocode} 

\begin{algorithm}
\caption{Two-step algorithm for training the RNCRN}\label{alg:cap}
\begin{algorithmic}
\Require $f_i( \bar{x}_1, \dots, \bar{x}_N )$ for $i= 1,2, \dots, N$ \Comment{target system~(\ref{eq:target_ODEs})}
\Require $\mathbb{K}_1, \mathbb{K}_2, \ldots, \mathbb{K}_N \subset (0,+\infty)$ \Comment{target compact sets}
\Require $\gamma > 0$ and $\beta_1,\beta_2,\ldots,\beta_N \ge 0$ \Comment{positive and nonnegative rate coefficients}
\Require $\varepsilon_Q > 0$ and $\varepsilon_D > 0$ \Comment{quasi-static and dynamical tolerance}
\Require $\mathcal{L}_Q$ and $\mathcal{L}_D$ \Comment{loss functions for the quasi-static and the dynamic approximation}
\Require $T>0$ \Comment{time length of dynamics}
\Require $a_1,a_2,\ldots,a_M, b_1,b_2,\ldots,b_M \ge 0 $ \Comment{initial conditions of target system }
\Require{$ 0 < \mu \ll 1$} \Comment{speed of chemical perceptrons lower bound}
\Require \Call{Integrate}{$\bold{f}, t \in [0, T], \bold{z}(0)$} \Comment{procedure for numerical integration}
\Require  \Call{Minimize}{$\mathcal{L}$, $\Omega$} \Comment{procedure to minimize loss (\textit{i.e.} backpropagation~\cite{rumelhart_learning_1986} )}
\Require  \Call{Reduce}{$\mu$} \Comment{procedure to reduce $\mu^*$ and $\varepsilon_Q$ }
\Procedure{Quasi-static approximation}{$\varepsilon_Q$}
\State $M \gets 0$
\State $\varepsilon_Q^* \gets \infty$
\While{$\varepsilon_Q^* > \varepsilon_Q$} 
\State $M \gets M+1$
\State $\mathcal{L} \gets \mathcal{L}_Q \left \{ f_i(x_1,x_2,\ldots,x_N)/x_i - \beta_i/x_i), 
\sum_{j=1}^M \alpha_{i,j} 
\sigma_{\gamma} \left(\sum_{k=1}^N 
\omega_{j,k} x_k + \theta_{j} \right) \right \}$
\State $\varepsilon_Q^*, \alpha_{i,j}^*, \omega_{j,k}^*, \theta_{j}^*  \gets$ \Call{Minimize}{$\mathcal{L}, (\alpha_{i,j}, \omega_{j,k}, \theta_{j})$}
\EndWhile
\State \textbf{return} $\varepsilon_Q^*, \alpha_{i,j}^*$, $\omega_{j,k}^*$, $\theta_{j}^*$
\EndProcedure
\Procedure{Dynamical approximation}{$\varepsilon_Q$}
\State $\varepsilon_D^* \gets \infty$
\State $\mu^* \gets 1$
\State $\bold{\bar{z}}(0) \gets (a_1,a_2,\ldots,a_M)$
\State $\bold{z}(0) \gets (a_1,a_2,\ldots,a_M, b_1,b_2,\ldots,b_M)$
\State $\bold{TARGET} \gets $ Equation~(\ref{eq:target_ODEs})
\State $\bar{x}_1(t), \dots, \bar{x}_N(t) \gets$\Call{Integrate}{$\bold{\bold{TARGET}}, t \in [0, T], \bold{\bar{z}}(0)$}
\State  $\varepsilon_Q^*, \alpha_{i,j}^*$, $\omega_{j,k}^*$, $\theta_{j}^* \gets$ \Call{Quasi-static approximation}{$\varepsilon_Q$}

\While{$\varepsilon_D^* > \varepsilon_D$ and $\mu^* > \mu$}
\State $\mu^* \gets$ \Call{Reduce}{$\mu^*$} 
\State $\bold{RNCRN} \gets $ Equation~(\ref{eq:single_layer_RRE}) with $\alpha_{i,j} = \alpha_{i,j}^*$,
$\theta_{j} = \theta_{j}^*$, $\omega_{j,i} = \omega_{j,i}^*$, $\gamma$, $\mu^*$, and $\beta_1,\beta_2,\ldots,\beta_N$

\State $x_1(t), \dots, x_N(t), y_1(t), \dots, y_M(t) \gets$ \Call{Integrate}{$\bold{RNCRN}, t \in [0, T], \bold{z}(0)$}
\State $\varepsilon_D^* \gets \mathcal{L}_D\{ x_i(t), \bar{x}_i(t)\}$ over all $t \in [0, T]$ and all $i = 1, 2, \dots, N$
\EndWhile
\If{$\varepsilon_D^* < \varepsilon_D$}
\State \textbf{return} $\alpha_{i,j}^*$, $\omega_{j,k}^*$, $\theta_{j}^*$, $\mu^*$
\Else
\State $\varepsilon_Q \gets $ \Call{Reduce}{$\varepsilon_Q$}
\State \textbf{return} \Call{Dynamical approximation}{$\varepsilon_Q$} \Comment{recursive procedure} 
\EndIf
\EndProcedure
\end{algorithmic}
\end{algorithm}

\end{document}